\authorrunning{H.-H. Shuai, D.-N. Yang, P. S. Yu, and M.-S. Chen}
\begin{document}
\title{Scale-Adaptive Group Optimization \\for Social Activity Planning}
\author{Hong-Han Shuai$^1$%
\and De-Nian Yang$^2$\and Philip S. Yu$^3$\and Ming-Syan Chen$^{1,2}$}
\institute{Graduate Institute of Communication Engineering, National Taiwan University \and Institute of Information Science, Academia Sinica\and Department of Computer Science, University of Illinois at Chicago\\
\mailsa\\
\mailsb\\}
\maketitle
\vspace{-3mm}
\begin{abstract}
Studies have shown that each person is more inclined to enjoy a group
activity when 1) she is interested in the activity, and 2) many friends with
the same interest join it as well. Nevertheless, even with the interest and
social tightness information available in online social networks, nowadays
many social group activities still need to be coordinated manually. In this
paper, therefore, we first formulate a new problem, named Participant
Selection for Group Activity (PSGA), to decide the group size and select
proper participants so that the sum of personal interests and social
tightness of the participants in the group is maximized, while the activity
cost is also carefully examined. To solve the problem, we design a new
randomized algorithm, named Budget-Aware Randomized Group Selection (BARGS),
to optimally allocate the computation budgets for effective selection of the
group size and participants, and we prove that BARGS can acquire the
solution with a guaranteed performance bound. The proposed algorithm was
implemented in Facebook, and experimental results demonstrate that social groups
generated by the proposed algorithm significantly outperform the baseline
solutions.
\end{abstract}

\institute{}

\section{Introduction}

\label{Introduction}

Studies have shown that two important factors are usually involved in a
person's decision to join a social group activity: (1) interest in the
activity topic or content, and (2) social tightness with other attendees 
\cite{DeutschGDM,KaplanGDM}. For example, if a person who appreciates jazz
music has complimentary tickets for a jazz concert in Rose Theatre, she is
inclined to invite her friends or friends of friends who are also jazzists.
However, even the information on the two factors is now available online,
the attendees of most group activities still need to be selected manually,
and the process will be tedious and time-consuming, especially for a large
social activity, given the complicated social link structure and the diverse
interests of potential attendees.

Recent studies have explored community detection, graph clustering and graph
partitioning to identify groups of nodes mostly based on the graph structure 
\cite{BrandesCD08}. The quality of an obtained community is usually measured
according to its internal structure, together with its external connectivity
to the rest of the nodes in the graph \cite{SeshadhriCD12}%
. Those approaches are not designed for activity planning because it does
not consider the interests of individual users along with the cost of
holding an activity with different numbers of participants. An event which
attracts too few or too many attendees will result in unacceptable loss for
the planner. Therefore, it is important to incorporate the preference of
each potential participant, their social connectivity, and the activity cost
during the planning of an activity.

With this objective in mind, a new optimization problem is
formulated, named \textit{Participant Selection for Group Activity (PSGA)}.
The problem is given a cost function related to the group size and a social
graph $G$, where each node represents a potential attendee and is associated
with an interest score that describes the individual level of
interest. Each edge has a social tightness score corresponding to the mutual
familiarity between the two persons. Since each participant is more inclined
to enjoy the activity when 1) she is interested in the activity, and 2) many
friends with the same interest join as well, the \textit{preference} of a
node $v_i$ for the activity can be represented by the sum of its interest
score and social tightness scores of the edges connecting to other
participants, while the \textit{group preference} is sum of the total
interest scores of all participants and the social tightness scores of the
edges connecting to any two participants. Moreover, the \textit{group utility%
} here is represented by the group preference subtracted by the \textit{%
activity cost} (ex. the expense in food and siting), which is usually correlated to the
number of participants.\footnote{%
Different weighted coefficients can be assigned to the group utility and
activity cost according to the corresponding scenario.} The objective of
PSGA is to determine the best group size and select proper participants, so
that the group utility is maximized. In addition, the induced graph of the
set $F$ of selected participants is desired to be a connected component, so
that each attendee is possible to become acquainted with another attendee
according to a social path\footnote{%
For some group activities, it is not necessary to ensure that $F$ leads to a
connected subgraph, and those scenarios can be handled by adding a virtual
node $v$ connecting to every other node in $G$, and choosing $v$ in $F$ for
PSGA always creates a connected subgraph in $G\cup \left\{ v\right\} $, but $%
F$ may not be a connected subgraph in $G$.}.

One possible approach to solving PSGA is to examine every possible combination
on every group size. However, this enumeration approach of group size $k$
requires the evaluation of $C_{k}^{n}$ candidate groups, where $n$ is the
number of nodes in $G$. Therefore, the number of group size
and attendee combinations is $O(2^{n})$, and it thereby is not feasible in practical cases.
Another approach is to incrementally construct the group using a greedy
algorithm that iteratively tries each group size and sequentially chooses an
attendee that leads to the largest increment in group utility at each
iteration. However, greedy algorithms are inclined to be trapped in local
optimal solutions. To avoid being trapped in local
optimal solutions, randomized algorithms have been proposed as a simple but
effective strategy to solve problems with large instances \cite%
{MitzenmacherRand05}.

A simple randomized algorithm is to randomly choose multiple start nodes
initially. Each start node is considered as a partial solution, and a node
neighboring the partial solution is randomly chosen and added to the partial
solution at each iteration later. Nevertheless, this simple strategy
has three disadvantages. Firstly, a start node that has the potential to
generate final solutions with high group utility does not receive sufficient
computational resources for randomization in the following iterations. More
specifically, each start node in the randomized algorithm is expanded to
only one final solution. Thus, a good start node will usually fail to
generate a solution with high group utility since it only has one chance to
randomly generate a final solution. The second disadvantage is that the
expansion of the partial solution does not differentiate the selection of
the neighboring nodes. Each neighboring node is treated equally and chosen
uniformly at random for each iteration. Even this issue can be partially
resolved by assigning the selection probability to each neighboring node according to
its interest score and the social tightness of incident edges, this
assignment will lead to the greedy selection of neighbors and thus tends to
be trapped in local optimal solutions as well. The third disadvantage is
that the linear scanning of different group sizes is not computationally tractable
for real scenarios as an online social network contains an enormous number
of nodes.

Keeping the above observations in mind, we propose a randomized algorithm,
called \textit{Budget-Aware Randomized Group Selection (BARGS)}, to
effectively select the start nodes, expand the partial solutions, and
estimate the suitable group size. The
computational budget represents the target number of random solutions. Specifically, \emph{BARGS} first selects a group size limit $%
k_{max}$ in accordance with the cost function\footnote{%
For instance, if the largest capacity of available stadiums for a football
game is $20,000$, $k_{max}$ is set as $20,000$.}. Afterward, $m$ start nodes
are selected, and neighboring nodes are properly added to expand the partial
solution iteratively, until $k_{max}$ nodes are included, while the group size corresponding to the largest group utility is acquired finally. Each start node in \emph{BARGS} is expanded to multiple final
solutions according to the\ assigned budget. To properly invest the
computational budgets, each stage of \emph{BARGS} invests more budgets on
the start nodes and group sizes that are more inclined to generate good
final solutions, according to the sampled results from the previous stages.
Moreover, the node selection probability is adaptively assigned in each stage by
exploiting the cross entropy method. In this paper, we show that our
allocation of computation budgets is the optimal strategy, and prove that the
solution acquired by \emph{BARGS }has a guaranteed performance bound.

The rest of this paper is organized as follows. Section \ref{Prilim}
formulates PSGA and surveys related works. Sections \ref{SIIGQ-CBA} explains 
\emph{BARGS} and derives the performance bound. User study and experimental results are
presented in Section \ref{Exp}, and we conclude this paper in
Section \ref{Conclu}.

\section{Preliminary}

\label{Prilim}

\subsection{Problem Definition}

\label{SGQProb} Given a social network $G=(V,E)$, where each vertex $%
v_{i}\in V$ and each edge $e_{i,j}\in E$ are associated with an interest
score $\eta _{i}$ and a social tightness score $\tau _{i,j}$ respectively,
we study a new optimization problem for finding a set $F$ of vertices which
maximizes the \textit{group utility} $U(F)$, i.e., 
\begin{equation}
U(F)=\sum_{v_{i}\in F}(\eta _{i}+\sum_{v_{j}\in F:e_{i,j}\in E}\pi
_{i,j})-\beta C(|F|),
\end{equation}%
where $F$ with $\left\vert F\right\vert \leq k_{max}$ is a connected
subgraph in $G$ to encourage each attendee to be acquainted with another
attendee with at least one social path in $F$, $C$ is a non-negative
activity cost function based on the number of attendees, and $\beta $ is a
weighted coefficient between the preference and cost. For each node $i$, let $%
\eta _{i}+\sum_{v_{j}\in F:e_{i,j}\in E}\pi _{i,j}$ denote the \textit{%
preference} of node $i$ on the social group activity. PSGA is very
challenging due to the tradeoff between interest, social tightness, and the
cost function, while the constraint assuring that $F$ is connected also
complicates this problem because it is no longer able to arbitrarily choose
any nodes from $G$. Indeed, we show that PSGA is NP-hard..

\begin{theorem}
PSGA is NP-Hard.
\end{theorem}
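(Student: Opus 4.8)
The plan is to give a polynomial-time reduction from the classical \textsc{Clique} problem, which is NP-complete: given a graph $H=(V_H,E_H)$ and an integer $k$, decide whether $H$ contains a clique on $k$ vertices. The key observation is that, with a trivial choice of the remaining parameters, the group utility $U(F)$ degenerates to (twice) the number of edges induced by $F$, so PSGA becomes the problem of packing as many edges as possible into at most $k$ vertices --- which is exactly what a $k$-clique does, and, crucially, a clique is connected, so the connectivity requirement of PSGA costs us nothing.

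Concretely, from an instance $(H,k)$ of \textsc{Clique} I would build the PSGA instance $G=H$, setting every interest score $\eta_i=0$, every social tightness score $\pi_{i,j}=1$ for $e_{i,j}\in E_H$, the cost function $C\equiv 0$ (permitted, since $C$ need only be non-negative), $\beta=1$, and the size bound $k_{max}=k$. For any connected $F$ with $|F|\le k$ we then have $U(F)=2\,|E(F)|$, where $E(F)$ denotes the edge set induced by $F$. I would then compare the PSGA optimum against the threshold $k(k-1)$.

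The correctness argument is the heart of the proof. For the forward direction, if $H$ has a $k$-clique $Q$, then $Q$ is a connected subgraph of $G$ with $|Q|=k\le k_{max}$ and $U(Q)=2\binom{k}{2}=k(k-1)$, so the optimum is at least $k(k-1)$. For the converse, suppose some feasible $F$ attains $U(F)\ge k(k-1)$; then $|E(F)|\ge\binom{k}{2}$, but a graph on $|F|\le k$ vertices has at most $\binom{|F|}{2}\le\binom{k}{2}$ edges, which forces $|F|=k$ and $|E(F)|=\binom{k}{2}$, i.e. $F$ is a $k$-clique in $H$. Hence the PSGA optimum equals $k(k-1)$ iff $H$ has a $k$-clique, and since the construction is clearly polynomial (indeed linear) in the size of $(H,k)$, PSGA is NP-hard. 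One may also remark that the decision version of PSGA lies in NP --- connectivity, the size bound, and the value of $U(F)$ are all checkable in polynomial time --- so it is in fact NP-complete, although only hardness is claimed here.

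I do not expect a serious obstacle; the only points requiring care are (i) choosing a source problem whose yes-witnesses are already connected, so that the connectivity constraint in PSGA neither helps nor hurts, and (ii) handling the fact that the PSGA bound is $|F|\le k_{max}$ rather than $|F|=k_{max}$, which is dispatched by the counting bound $|E(F)|\le\binom{|F|}{2}$ above. If one wished to exercise the cost function and the group-size decision nontrivially, the same skeleton still goes through by taking $k_{max}=|V_H|$ and choosing $C$ so that $2\binom{s}{2}-\beta C(s)$ is maximized only at $s=k$, but this refinement is not needed to establish NP-hardness.
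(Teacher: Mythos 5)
Your proof is correct, but it takes a genuinely different route from the paper. The paper reduces from the Densest $k$-Subgraph (DkS) problem: it likewise zeroes out the interest scores and sets every tightness score to $1$, but it forces the group size to be exactly $k$ by putting $C(i)=\infty$ for $i\neq k$, so that the PSGA optimum coincides with the DkS optimum. You instead reduce from \textsc{Clique}, keep $C\equiv 0$, cap the size with $k_{max}=k$, and use the counting bound $|E(F)|\le\binom{|F|}{2}\le\binom{k}{2}$ against the threshold $k(k-1)$ to force the optimum to be a $k$-clique. Your version buys two things: (i) the source problem is one of Karp's classics rather than DkS (whose hardness is itself usually derived from \textsc{Clique}), making the argument more self-contained and elementary; and (ii) it handles the connectivity requirement of PSGA cleanly, since a $k$-clique is automatically connected and the constraint only shrinks the feasible set in the converse direction. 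The paper's reduction is silent on this last point --- a densest $k$-subgraph need not be connected, so its claimed equivalence between the PSGA optimum and the DkS optimum requires more care than the write-up gives it --- whereas the clique reduction sidesteps the issue entirely. What the paper's construction buys in exchange is a reduction that exercises the cost function and the group-size selection nontrivially (and is closer in spirit to an approximation-preserving reduction from DkS), which your final remark shows you could also replicate if desired.
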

\begin{proof}
\textbf{\ }We prove that PSGA is NP-hard with the reduction from DkS problem \cite{Feige01}. Given a graph $%
G_{D}=(V_{D},E_{D})$, DkS finds a subgraph with $k$ nodes $F_D$ to maximize the density of
the subgraph. In other words, the purpose of DkS is to maximize the number of edges $E(F_D)$ in the subgraph induced by the selected nodes.

For each instance of DkS, we construct an instance for PSGA by letting $G=G_{D}$ and $k_{max}=\infty $, where $\eta _{i}$ of each node $v_{i}\in V$ is set as $0$, $\tau _{i,j}$ of each edge $e_{i,j}\in E$ is assigned as $1$, and $\beta=1$, $C(i)=0$ for $i=k$ and $C(i)=\infty $ for $i \neq k$. Therefore, PSGA will always select $k$ nodes to avoid creating a negative objective value. We first prove
the sufficient condition. For each instance of DkS with solution node set $%
F_{D}$, we let $F=F_{D}$. If the number of edges $E(F_{D})$ in
the subgraph of DkS is $\delta $, the preference of PSGA $W(F)$ is also $\delta$ because $F=F_{D}$ and the optimal group size must be $k$. We then prove the necessary condition. For
each instance of PSGA with $F$, we select the same nodes
for $F_{D}$, and the number of edges $E(F_{D})$ must be maximized since the node number in the solution of PSGA is $k$. The theorem follows. \vspace{-2mm}
\end{proof}

\subsection{Related Works}

\label{RelatedWorks} A recent line of study has been proposed to find
cohesive subgroups in social networks with different criteria, such as
cliques, $n$-clubs, $k$-core, and $k$-plex. Sar\'{\i}y{\"{u}}ce et al. \cite%
{kcoreVLDB13} proposed an efficient parallel algorithm to find a $k$-core
subgraph, where every vertex is connected to at least $k$ vertices in the
subgraph. Xiang et al. \cite{cliqueICDE13} proposed a branch-and-bound
algorithm to acquire all maximal cliques that cannot be pruned during the
search tree optimization. Moreover, finding the maximum k-plexes was
comprehensively discussed in \cite{McClosky12}. On the other hand, community
detection and graph clustering have been exploited to identify the subgraphs
with the desired structures \cite{BrandesCD08}. The quality of a community
is measured according to the structure inside the community and the
structure between the community and the rest of the nodes in the graph, such
as the density of local edges, deviance from a random null model, and
conductance \cite{SeshadhriCD12}. Nevertheless, the above
models did not examine the interest score of each user and the social
tightness scores between users, which have been regarded as crucial factors
for social group activities. Moreover, the activity cost for the group is
not incorporated during the evaluation.

In addition to dense subgraphs, social groups with different characteristics
have been explored for varied practical applications. Expert team formation
in social networks has attracted extensive research interest. The problem of
constructing an expert team is to find a set of people possessing the
required skills, while the communication cost among the chosen friends is
minimized to optimize the rapport among the team members to ensure efficient
operation. Communication costs can be represented by the graph diameter,
the size of the minimum spanning tree, and the total length of the shortest
paths \cite{KargarExperts11}. By contrast, minimizing the total
spatial distance with R-Tree from the group with a given number of nodes to
the rally point is also studied \cite{Yang12}. Nevertheless, this paper
focuses on a different scenario that aims at identifying a group with the
most suitable size according to the activity cost, while those selected
participants also share the common interest and high social tightness.

\section{Algorithm Design for PSGA\label{SIIGQ-CBA}}

To solve PSGA, a baseline approach is to incrementally
constructing the solution by sequentially choosing and adding a neighbor
node that leads to the largest increment in the group preference until $%
k_{max}$ people are selected. Afterward, we derive the group utility for
each $k$ by incorporating the activity cost, $1\leq k\leq k_{max}$, and
extract the group size $k^{\ast}$ with the maximum group utility.

The greedy algorithm, despite the simplicity, the search space of the greedy
algorithm is limited and thus tends to be trapped in a local optimal
solution, because only a single sequence of solutions is explored. To address the
above issues, this paper proposes a randomized algorithm \emph{BARGS} to
randomly choose $m$ start nodes\footnote{%
The impact of $m$ will be studied in Section \ref{Exp}.}. \emph{BARGS }%
leverages the notion of Optimal Computing Budget Allocation (OCBA) \cite%
{OCBA10} to systematically generate the solutions from each start node,
where the start nodes with more potential to generate the final solutions
with large group utility will be allocated with more budgets (i.e., expanded
to more final solutions). In addition, since each start nodes can generate
the final solutions with different group sizes, the size with larger group utility
will be associated with more budgets as well (i.e., generated more times).
Specifically, \emph{BARGS} includes the following two phases.

\emph{1) Selection and Evaluation of Start Nodes and Group Sizes:} This
phase first selects $m$ start nodes according to the summation of the
interest scores and social tightness scores of incident edges. Each start
node acts as a seed to be expanded to a few final solutions. At each
iteration, a partial solution, which consists of only a start node at the
first iteration or a connected set of nodes at any iteration afterward, is
expanded by randomly selecting a node neighboring to the partial solution,
until $k_{\max }$ nodes are included. The group utility of each final
solution is evaluated to optimally allocate different computational budgets
to different start nodes and different group sizes in the next phases.

\emph{2) Allocation of Computational Budgets:} This phase is divided into $r$
stages\footnote{The detailed settings of the parameters of the
algorithm, such as $m$, $r$, $\alpha$, and $\beta$ are presented in
the next section}, while each stage shares the same total computational budget. In
the first stage, the computational budget allocated to each start node is
determined by the sampled group utility in the first phase. In each stage
afterward, the computational budget allocated to each start node is adjusted
by the sampled results in the previous stages. Note that each node can
generate different numbers of final solutions with different group sizes.
The sizes with small group utility sampled in the previous stages will be
associated with smaller computational budgets in the current stage.
Therefore, if the activity cost is a convex cost function, the cost
increases more significantly as the group size grows, and \emph{BARGS} tends
to allocate smaller computational budgets and thus generate fewer final
solutions with large group sizes.

During the expansion of the partial solutions, we differentiate the
probability to select each node neighboring to a partial solution. One
intuitive way is to associate each neighboring node with a different
probability according to the sum of the interest score and social tightness
score on the incident edge. Nevertheless, this assignment is similar to the
greedy algorithm as it limits the scope to only the local information
associated with each node, making it difficult to generate a final solution
with large group utility. By contrast, \emph{BARGS} exploits the cross
entropy method \cite{RubinsteinCE01} according to sampled results in the
previous stages in order to optimally assign a probability to the edge
incident to a neighboring node.

The detailed pseudocode is presented in Algorithm \ref{BARGS-alg}. In the following, we first present how to optimally
allocate the computational budgets to different start nodes and different
group sizes. Afterward, we exploit the cross entropy method to differentiate
the neighbor selection during the expansion of the partial solutions.
Finally, we derive the approximation ratio of the proposed algorithm.

\subsubsection{Allocation of Computational Budgets}

Similar to the baseline greedy algorithm, allocating more computational
budgets to a start node $v_{i}$ with larger group utility (i.e., $\eta
_{i}+\sum_{v_{j}\in F:e_{i,j}\in E}\pi _{i,j}$) examines only the local
information and thus is difficult to generate the solution with large group
utility. Therefore, to optimally allocate the computational budgets for each
start node and size, we first define the solution quality as follows.

\begin{definition}
The solution quality, denoted by $Q$, is defined as the maximum group
utility of the solution generated from the $m$ start nodes among all sizes.
\end{definition}

For each stage $t$ of phase 2 in \emph{BARGS}, let $N_{i,k,t}$ denote the
computational budgets allocated to the start node $v_{i}$ with size $k$ in
the $t$-th stage. In the following, we first derive the optimal ratio of the
computational budgets allocated to any two start nodes $v_{i}$ and $v_{j}$
with size $k$ and $l$, respectively. Let two random variables $Q_{i,k}$ and $%
Q_{i,k}^{\ast }$ denote the sampled group utility of any solution and the
maximal sampled group utility of a solution for start node $v_{i}$ with size 
$k$, respectively. If the activity cost is not considered, according to the
central limit theorem, $Q_{i,k}$ follows the normal distribution when $%
N_{i.k}$ is large, and it can be approximated by the uniform distribution in 
$[c_{i,k},d_{i,k}]$ as analyzed in OCBA \cite{OCBA10}, where $c_{i,k}$ and $%
d_{i,k}$ denote the minimum and maximum sampled group utility in the
previous stages, respectively. On the other hand, when the activity cost is
considered, the cumulative distribution function is shifted by $C(k)$, and
it still follows the same distribution. Therefore, we have the following
lemma.

\begin{lemma}
The probability that the solution generated from the start node $v_{i}$ with
size $k$ is better than the solution generated from the start node $v_{j}$
with size $l$, i.e., $P(Q_{i,k}^{\ast }\leq Q_{j,l}^{\ast })$, is as
follows. 
\begin{equation}
P(Q_{i,k}^{\ast }\leq Q_{j,l}^{\ast })\leq \left\{ \begin{aligned}
0&~~\text{ if }d_{j,l} \leq c_{i,k}.
\\\frac{1}{2}(\frac{d_{j,l}-c_{i,k}}{d_{i,k}-c_{i,k}})^{N_{i,k}}&~~\text{ if
}d_{j,l} \geq c_{i,k} \\ 1&~~\text{ if }d_{i,k} \leq c_{j,l}\end{aligned}%
\right.  \label{usedprob}
\end{equation}%
\label{lemmausedprob}
\end{lemma}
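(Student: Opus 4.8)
The plan is to cash in the distributional model set up just before the statement. Each single sampled utility $Q_{i,k}$ is treated as uniform on $[c_{i,k},d_{i,k}]$ (including the activity cost only shifts this window by $C(k)$ and does not affect the comparison), so the maximum $Q_{i,k}^{\ast}$ of the $N_{i,k}$ independent draws has cdf $F_{i,k}(x)=\bigl(\tfrac{x-c_{i,k}}{d_{i,k}-c_{i,k}}\bigr)^{N_{i,k}}$ on $[c_{i,k},d_{i,k}]$, with $F_{i,k}\equiv 0$ to the left and $\equiv 1$ to the right of that interval; the same holds for $Q_{j,l}^{\ast}$, and the two maxima are independent. First I would clear the two degenerate regimes from the supports alone. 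If $d_{j,l}\le c_{i,k}$, then $Q_{j,l}^{\ast}\le d_{j,l}\le c_{i,k}\le Q_{i,k}^{\ast}$ almost surely and the coincidence event has probability $0$, so $P(Q_{i,k}^{\ast}\le Q_{j,l}^{\ast})=0$; symmetrically, if $d_{i,k}\le c_{j,l}$ the event is almost sure and the bound $1$ is immediate.

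For the overlapping case $d_{j,l}\ge c_{i,k}$, I would condition on $Q_{j,l}^{\ast}$ and use independence to write $P(Q_{i,k}^{\ast}\le Q_{j,l}^{\ast})=\int F_{i,k}(y)\,dF_{j,l}(y)$. Since $Q_{j,l}^{\ast}\le d_{j,l}$ with probability one, the event $\{Q_{i,k}^{\ast}\le Q_{j,l}^{\ast}\}$ forces $Q_{i,k}^{\ast}\le d_{j,l}$, and $P(Q_{i,k}^{\ast}\le d_{j,l})=F_{i,k}(d_{j,l})=\bigl(\tfrac{d_{j,l}-c_{i,k}}{d_{i,k}-c_{i,k}}\bigr)^{N_{i,k}}$ when $d_{j,l}\le d_{i,k}$ (and the asserted bound is vacuous otherwise, since its right side then exceeds $\tfrac12$). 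This yields the factorization
\[
P(Q_{i,k}^{\ast}\le Q_{j,l}^{\ast})=F_{i,k}(d_{j,l})\cdot P\bigl(Q_{i,k}^{\ast}\le Q_{j,l}^{\ast}\,\big|\,Q_{i,k}^{\ast}\le d_{j,l}\bigr),
\]
so everything reduces to bounding the conditional factor by $\tfrac12$. Conditioned on $\{Q_{i,k}^{\ast}\le d_{j,l}\}$, the variable $Q_{i,k}^{\ast}$ is the maximum of $N_{i,k}$ uniforms on the common window $[c_{i,k},d_{j,l}]$, on which $Q_{j,l}^{\ast}$ also lives; comparing the two maximum order statistics on that shared interval, each is at least as likely to dominate the other, which supplies the factor $\tfrac12$. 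Multiplying the two factors gives the middle case of \eqref{usedprob}.

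The crux — and the step I expect to be the real obstacle — is exactly this last $\tfrac12$: turning ``each is at least as likely to be the larger'' into a clean inequality. I would obtain it by stochastic dominance: conditioned on $\{Q_{i,k}^{\ast}\le d_{j,l}\}$, $Q_{i,k}^{\ast}$ has cdf $\bigl(\tfrac{t-c_{i,k}}{d_{j,l}-c_{i,k}}\bigr)^{N_{i,k}}$, and an elementary monotonicity check ($c\mapsto\tfrac{t-c}{d-c}$ is decreasing for $t\le d$) shows that, whenever $c_{i,k}\ge c_{j,l}$ and $N_{i,k}\ge N_{j,l}$ — the relevant ``leading-design'' situation — this is pointwise at most the cdf $\bigl(\tfrac{t-c_{j,l}}{d_{j,l}-c_{j,l}}\bigr)^{N_{j,l}}$ of $Q_{j,l}^{\ast}$; combined with independence and the fact that $F_{Q_{j,l}^{\ast}}(Q_{j,l}^{\ast})$ is uniform on $[0,1]$, one gets
\[
P\bigl(Q_{i,k}^{\ast}\le Q_{j,l}^{\ast}\,\big|\,Q_{i,k}^{\ast}\le d_{j,l}\bigr)=E\bigl[F_{Q_{i,k}^{\ast}\mid\cdot}(Q_{j,l}^{\ast})\bigr]\le E\bigl[F_{Q_{j,l}^{\ast}}(Q_{j,l}^{\ast})\bigr]=\tfrac12.
\]
Outside that situation the $\tfrac12$ is best read as the OCBA uniform-model approximation rather than an exact inequality (a direct integration, substituting $y=c_{i,k}+(d_{j,l}-c_{i,k})s$ and integrating by parts, produces the sharper factor $N_{j,l}/(N_{i,k}+N_{j,l})$). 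The remaining ingredients — the two degenerate cases, the support reasoning, and the cost-shift remark — are routine.
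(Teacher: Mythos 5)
Your setup coincides with the paper's: both model each sampled utility $Q_{i,k}$ as uniform on $[c_{i,k},d_{i,k}]$, observe that the activity cost merely translates that window, and pass to the maximum via $P_{Q_{i,k}^{\ast}}(x)=P_{Q_{i,k}}(x)^{N_{i,k}}$. The difference is at the decisive middle case: the paper does not derive it at all --- after writing down the shifted cdf it simply asserts that the inequality ``can be derived according to'' an external reference --- whereas you supply an actual argument, namely the exact factorization $P(Q_{i,k}^{\ast}\le Q_{j,l}^{\ast})=F_{i,k}(d_{j,l})\cdot P(Q_{i,k}^{\ast}\le Q_{j,l}^{\ast}\mid Q_{i,k}^{\ast}\le d_{j,l})$ (valid because $Q_{j,l}^{\ast}\le d_{j,l}$ almost surely), which isolates the factor $\bigl(\tfrac{d_{j,l}-c_{i,k}}{d_{i,k}-c_{i,k}}\bigr)^{N_{i,k}}$, plus a stochastic-dominance bound of $\tfrac12$ on the conditional term. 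Your route buys something the paper's citation hides: it makes visible that the $\tfrac12$ is \emph{not} unconditional. For instance, with $[c_{i,k},d_{i,k}]=[0,1]$, $[c_{j,l},d_{j,l}]=[0.9,1]$ and $N_{i,k}=N_{j,l}=1$, only the middle case applies yet $P(Q_{i,k}^{\ast}\le Q_{j,l}^{\ast})\ge P(Q_{i,k}^{\ast}\le 0.9)=0.9>\tfrac12$; so the stated bound really does require the ``leading-design'' hypotheses you identify ($c_{i,k}\ge c_{j,l}$ and $N_{i,k}\ge N_{j,l}$, i.e.\ the comparison is made against the currently best design), which the paper leaves implicit in its appeal to the OCBA literature. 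Your parenthetical sharper constant $N_{j,l}/(N_{i,k}+N_{j,l})$ is exact only for coincident supports and again falls below $\tfrac12$ only when $N_{i,k}\ge N_{j,l}$, the same hypothesis. In short: same distributional model and same degenerate cases, but you prove (and correctly qualify) the one step the paper outsources; nothing in your argument would fail under the conditions in which the lemma is actually invoked.
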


\vspace{-5mm}

\begin{proof}
The cumulative distribution
function of $Q_{i,k}$ is  
\begin{equation*}
P_{Q_{i,k}}(x)=\left\{ \begin{aligned} 0&~~\text{ if }x\leq c_{i,k}. \\
\frac{x-c_{i,k}}{d_{i,k}-c_{i,k}}&~~\text{ if }c_{i,k}\leq x\leq d_{i,k}. \\
1&~~\text{ otherwise.}\end{aligned}\right. 
\end{equation*}
After incorporating the operation cost function $C(|F|)$ with $|F|=k$, the
cumulative distribution function of $Q_{i,k}$ is 
\begin{equation}
P_{Q_{i,k}}(x)=\left\{ \begin{aligned} 0&~~\text{ if }x\leq c_{i,k}-\beta
C(k). \\ \frac{x-c_{i,k}}{d_{i,k}-c_{i,k}}&~~\text{ if }c_{i,k}-\beta
C(k)\leq x\leq d_{i,k}-\beta
C(k). \\ 1&~~\text{ otherwise.}\end{aligned}\right. 
\label{gaussianshift}
\end{equation}

Therefore, for the maximal value $Q_{i,k}^{\ast }$, \vspace{-2pt} 
\begin{equation*}
p_{Q_{i,k}^{\ast}}(x)=N_{i,k}P_{Q_{i,k}}(x)^{N_{i,k}-1}p_{Q_{i,k}}(x),
\end{equation*}%
\vspace{-3mm}
\begin{equation*}
P_{Q_{i,k}^{\ast}}(x)=P_{Q_{i,k}}(x)^{N_{i,k}}.
\end{equation*}%
From Eq. \ref{gaussianshift}, the cumulative distribution function is
shifted by $C(k)$ when we incorporate the operation cost, and it thus still follows the same distribution. Assume that $d_{j,l>}c_{i,k}$, the probability that the solution
generated from the start node $v_{i}$ with size $k$ is better than the
solution generated from the start node $v_{j}$ with size $l$,
i.e., $P(Q_{i,k}^{\ast }\leq Q_{j,l}^{\ast })$, can be derived according to \cite{WASO2013} as follows.
\begin{equation*}
P(Q_{i,k}^{\ast }\leq Q_{j,l}^{\ast })\leq\left\{ \begin{aligned} 0&~~\text{ if }d_{j,l} \leq c_{i,k}. \\\frac{1}{2}(\frac{d_{j,l}-c_{i,k}%
}{d_{i,k}-c_{i,k}})^{N_{i,k}}&~~\text{ if }d_{j,l} \geq c_{i,k} \\ 
1&~~\text{ if }d_{i,k} \leq c_{j,l}\end{aligned}\right. 
\end{equation*}
The Lemma follows.
\end{proof}

\vspace{-1mm}Let $v_{b}$ and $k_{b}^{\ast }$ denote the best start node and
best activity size for $v_{b}$, respectively. With Lemma \ref{lemmausedprob}%
, \emph{BARGS} in each stage allocates the computational budgets to
different start nodes as follows. \vspace{-2mm} 
\begin{equation}
\frac{N_{i,t}}{N_{j,t}}=\frac{P(Q=Q_{i}^{\ast })}{P(Q=Q_{j}^{\ast })},
\label{Eq:Thm1Eq}
\end{equation}%
where $P(Q=Q_{i}^{\ast })=\sum_{k}P(Q_{i,k}^{\ast }\geq Q_{b,k_{b}^{\ast
}}^{\ast })$, and the ratio of the computational budget allocation is
optimal in OCBA \cite{OCBA10}, which implies that any other allocation
generates a smaller $Q$. Note that if the allocated computational budgets
for a start node is $0$ in the $t$-th stage, we prune off the start node in
the any stage afterward. After deriving the computational budget $N_{i,t}$
for each start node $v_{i}$, we distribute the budgets to the solutions with
different group sizes. Let $N_{i,k,t}$ denote the number of solutions with
group size $k$ from the start node $v_{i}$. 
\begin{equation}
N_{i,k,t}=N_{i,t}\frac{P(Q_{i,k}^{\ast }\geq Q_{b,k_{b}^{\ast }}^{\ast })}{%
\sum_{k}P(Q_{i,k}^{\ast }\geq Q_{b,k_{b}^{\ast }}^{\ast })}.  \label{targetn}
\end{equation}%
It is worth noting that when we generate
a solution with size $k$, the solutions from size $1$ to size $k-1$ are also
generated as well. Therefore, to avoid generating an excess number the
solutions with small group sizes, it is necessary to relocate the
computation budgets. Let $\hat{N}_{i,k,t}$ denote the reallocated budget of
start node $v_{i}$ with size $k$ in $t$-th stage. \emph{BARGS} reallocates
the computational budgets from size $k-1$ as follows. 
\begin{equation}
\hat{N}_{i,k,t}=\max (0,N_{i,k,t}-\sum_{l>k}{\hat{N}_{i,l,t}}).
\label{reallocate}
\end{equation}%
Specifically, after deriving $N_{i,k,t}$ with Eq. \ref{targetn}, \emph{BARGS}
derives $\hat{N}_{i,k,t}$ from $k=k_{max}$ to $1$. Initially, $\hat{N}%
_{i,k_{\max },t}=N_{i,k_{\max },t}$. Afterward, for $k=k_{max}-1$, if $%
N_{i,k_{max}-1,t}$ is equal to $N_{i,k_{max},t}$, it is not necessary to
generate additional solutions with size $k_{max}-1$ since they have been
created during the generation of the solutions with size $k_{max}$. In this
case, $\hat{N}_{i,k_{max}-1,t}$ is $0$. Otherwise, \emph{BARGS} sets $\hat{N}%
_{i,k_{max}-1,t}=N_{i,k_{max}-1,t}-N_{i,,k_{max},t}$. The above process
repeats until $k=1$. Since the number of solutions with size $k$ is still $%
N_{i,k,t}$, the computational budget allocation is still optimal as shown in 
Eq. \ref{Eq:Thm1Eq}.

\vspace{-10pt}

\subsubsection{Neighboring Node Differentiation}

To effectively differentiate neighbor selection, \emph{BARGS} takes
advantage of the cross entropy method \cite{RubinsteinCE01} to achieve
importance sampling by adaptively assigning a different probability to each
neighboring node from the sampled results in previous stages. Take start
node $v_{i}$ with size $k$ as an example, after collecting $N_{i,k,1}$
samples $X_{i,k,1},X_{i,k,2},...,$ $X_{i,k,q},$ $...,$ $X_{i,k,N_{i,k,1}}$
generated from start node $v_{i}$, \emph{BARGS} calculates the total group
utility $U(X_{i,k,q})$ for each sample and sorts them in the descending
order, $U_{(1)}\geq ...\geq U_{(N_{i,k,1})}$. Let $\gamma _{i,k,1}$ denotes
the group utility of the top-$\rho $ performance sample, i.e. $\gamma
_{i,k,1}=U_{(\left\lceil \rho N_{i,k,1}\right\rceil )}$ . With those sampled
results, we set the selection probability $p_{i,k,t+1,j}$ of every node $%
v_{j}$ in iteration $t+1$ from the partial solution expanded from node $%
v_{i} $ by fitting the distribution of top-$\rho $ performance samples as
follows.

\begin{definition}
A Bernoulli sample vector, denoted as $X_{i,k,q}=\langle
x_{i,k,q,1},...,x_{i,k,q,j},$ $...,x_{i,k,q,n}\rangle$, is defined to be the 
$q$-th sample vector from start node $v_{i}$, where $x_{i,k,q,j}$ is $1$ if
node $v_{j}$ is selected in the $q$-th sample and $0$ otherwise.
\end{definition}

\vspace{-1mm} 
\begin{equation}
p_{i,k,t+1,j}=\frac{\sum_{q=1}^{N_{i,k,t}}I_{\{U(X_{i,k,q})\geq \gamma
_{i,k,t}\}}x_{i,k,q,j}}{\sum_{q=1}^{N_{i,k,t}}I_{\{U(X_{i,k,q})\geq \gamma
_{i,k,t}\}}},  \label{Eq:ce-opt}
\end{equation}%
where $I_{\{U(X_{i,k,q})\geq \gamma _{i,k,t}\}}$ is $1$ if the group utility
of sample $X_{i,k,q}$ exceeds a threshold $\gamma _{i,k,t}$ $\in $ $\mathbb{R%
}$, and $0$ otherwise. Intuitively, the neighbor that tends to generate a
better solution will be assigned a higher selection probability. As shown in 
\cite{RubinsteinCE01}, the above probability assignment scheme has been
proved to be optimal from the perspective of cross entropy. Eq. \ref%
{Eq:ce-opt} minimizes the Kullback-Leibler cross entropy (KL) distance
between node selection probability $\overrightarrow{p}_{i,k,t+1}$ and the
distribution of top-$\rho $ performance samples, such that the performance
of random samples in $(t+1)$-th stage is guaranteed to be closest to the top-%
$\rho $ performance samples in $t$-th stage.

\begin{figure}[t]
\centering
\vspace{80pt} \hspace{-40pt} \includegraphics[scale=0.4]{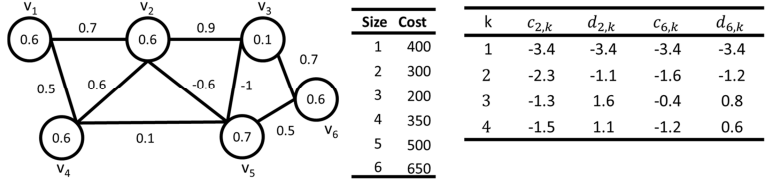} 
\vspace{-88mm} \vspace{-4pt}
\caption{Illustrative example of BARGS}
\label{example}
\end{figure}
\vspace{-10pt}

\begin{example}
Figure \ref{example} presents an illustrative example with a social network
of size $6$. For the greedy algorithm, $v_{5}$ is first selected since its
interest score is the maximum among all nodes, i.e., $0.7$. Afterward, node $%
v_{6}$ is extracted with total preference of $0.7+0.5+0.6=1.8$. $v_{4}$,
instead of $v_{2}$ or $v_{3}$, is chosen because it generates the largest
increment of preference, i.e., $0.7$, and leads to a group with total
preference of $2.5$. After $v_{1}$ is further selected with the increment of 
$1.1$, $v_2$ is selected with total preference of $4.9$. Finally, $v_3$ is
selected with total preference of $5.6$. Assume that the weighting $\beta$
between preference and cost function is $0.01$\footnote{%
The parameter setting of $\alpha$ will be introduced in more details in the
next section.}, the greedy algorithm scans each size to obtain the best
size, i.e., calculating the maximum among $0.7-0.01\cdot 400$, $%
1.8-0.01\cdot 300$, $2.5-0.01\cdot 200$, $3.6-0.01\cdot 350$, $4.9-0.01\cdot
500$, and $5.6-0.01\cdot 650$, and obtains the best size is $3$ with group
utility of $0.5$. In this simple example, the above algorithm is not able to
find the optimal solution since it facilitates the selection of nodes only
suitable at the corresponding iterations.

We also take Figure \ref{example} as an illustrative example for \emph{BARGS}
with $k_{max}=4$. Phase 1 first chooses $\lceil n/k_{max}\rceil =2$ start
nodes by summing up the topic interest score and the social tightness scores
for every node. Therefore, $v_{2}$ with $0.6+0.7+0.6+0.9-0.6=2.2$ and $v_{6}$
with $0.6+0.5+0.7=1.8$ are selected. Next, let $T=20$, $P_{b}=0.7$ and $%
\alpha =0.9$ in this example, and the number of stages is thus $r\leq \frac{%
Tk_{max}\ln \alpha }{n\ln (\frac{2(1-P_{b})}{m-1})}=\frac{20\cdot 4\ln 0.9}{%
6\ln (0.6)}\approx 2$. Each start node generates $5$ samples in the first
stage. The intermediate solution obtained so far is denoted as $V_{S}$, and
the candidate attendees extracted so far is denoted as $V_{A}$. Therefore,
by selecting $v_{2}$ as a start node, the total group utility of $V_{S}$\ $%
=\{v_{2}\}$\ is $0.6-0.01\cdot 400=-3.4$, and $V_{A}=%
\{v_{1},v_{3},v_{4},v_{5}\}$. Since the node selection probability is
homogeneous in the first stage, we randomly select $v_{1}$\ from $V_{A}$\ to
expand $V_{S}$. Now the total group utility of $V_{S}$\ $=\{v_{1},v_{2}\}$
is $U(V_{S})=0.6+0.7+0.6-0.01\cdot 300=-1.1$, and $V_{A}=\{v_{3},v_{4},v_{5}%
\}$. The process of expanding $V_{S}$\ continues until the cardinality of $%
V_{S}$\ reaches $k_{max}=4$, e.g. $v_{5}$ and then $v_{3}$. Afterward, we
record the first sample result $X_{2,2,1}=\langle 1,1,0,0,0,0\rangle $ with
the total group utility of $-1.1$, the worst result of $v_{2}$ with size $2$
($c_{2,2}=-1.1$), and the best result of $v_{2}$ with size $2$ ($%
d_{2,2}=-1.1 $). Similarly, $X_{2,3,1}=\langle 1,1,0,0,1,0\rangle $ with the
total group utility of $-1$ and $X_{2,4,1}=\langle 1,1,1,0,1,0\rangle $ with
the total group utility of $0.7$. The second sampled results from start node 
$v_{2}$\ are $\{v_{2},v_{3},v_{4},v_{1}\}$. Therefore, $X_{2,2,2}=\langle
0,1,1,0,0,0\rangle $ with the total group utility of $-1.4$, $%
X_{2,3,2}=\langle 0,1,1,1,0,0\rangle $ with the total group utility of $0.8$%
, $X_{2,4,2}=\langle 1,1,1,1,0,0\rangle $ with the total group utility of $%
1.2$. Afterward, the worst and the best results of $v_{2}$\ are updated to $%
c_{2,2}=-1.4$, $d_{2,2}=-1.1$, $c_{2,3}=-1$, $d_{2,3}=0.8$, $c_{2,4}=0.7$,
and $d_{2,4}=1.2$. After drawing $3$ more samples from node $v_{2}$, we
repeat the above process for start node $v_{6}$ with 5 samples. The results
are summarized on the right of Figure \ref{example}.

To allocate the computational budgets for the second stage, i.e., $r=2$, we
first find the allocation ratio $N_{2,2}:N_{6,2}$=$\frac{1}{2}((\frac{%
-1.1-(-1.3)}{1.6-(-1.3)})^{5}+1+(\frac{1.1-(-1.3)}{1.6-(-1.3)})^{5})$:\newline$\frac{%
1}{2}((\frac{-1.2-(-1.3)}{1.6-(-1.3)})^{5}$
+$(\frac{0.8-(-1.3)}{1.6-(-1.3)})^{5}+(\frac{0.6-(-1.3)}{1.6-(-1.3)})^{5})$ =%
$1.39:0.32$. Therefore, the allocated computational budgets for start nodes $%
v_{2}$\ and $v_{6}$\ are $\frac{10\cdot 1.39}{1.71}\approx 8$\ and $\frac{%
10\cdot 0.32}{1.71}\approx 2$, respectively. $\hat{N}_{2,2,2}$, $\hat{N}%
_{2,3,2}$, and $\hat{N}_{2,4,2}$ approximate $0$, $\frac{8}{1.388}\approx 6$%
, and $\frac{8\cdot 0.388}{1.388}\approx 2$, respectively. \emph{BARGS}
reallocates the computational budgets by $\hat{N}_{2,3,2}=N_{2,3,2}-\hat{N}%
_{2,4,2}=6$. Afterward, we update the node selection probability. Take the
node selection probability for start node $v_{2}$ with size $3$ in the
second stage for node $v_{1}$ as an example, i.e., $p_{2,3,2,1}$. Given $%
\rho =0.6$, i.e., \emph{BARGS} selects top-$3$ performance samples, if $%
v_{1} $ is selected 2 times in top-3 performance samples, $p_{2,3,2,1}$ is
set as $\frac{2}{3}$. The process for $v_{6}$ is similar and thus omitted
here due to the space constraint. After the second stage, the optimal
solution is $\{v_{1},v_{2},v_{4}\}$ with maximum group utility of $1.6$,
which is better than the group utility generated by the greedy algorithm,
i.e., $0.5$.
\end{example}

\subsubsection{Theoretical Results}

The following theorem first analyzes the probability $P(Q=Q^{\ast}_{b,k_b^{%
\ast}})$ that $v_{b}$, as decided according to the samples in the previous
stages, is actually the start node that generates the maximal group utility
with optimal size $k_b^{\ast}$. Let $\alpha$ denote the closeness ratio
between the maximum of the start node with the maximal group utility and the
maximum of other start nodes or with different sizes, i.e., $%
\alpha=(d_{a,k_{a}^{\ast}}-c_{b,k_{b}^{\ast}})/(d_{b,k_{b}^{%
\ast}}-c_{b,k_{b}^{\ast}})$, where $v_{a}$ generates the maximal group
utility among other start nodes. Therefore, in addition to $0$ and $1$, $%
\alpha$ is allowed to be any other value from $0$ to $1$.

\vspace{-2pt}

\begin{theorem}
\label{ControlComputationThm1} For PSGA with parameter $(m,T,k_{max})$,
where $m$ is the number of start nodes, $T$ is the total computational
budgets, and $k_{max}$ is the group size limit, the probability $%
P(Q=Q_{b,k_{b}^{\ast }}^{\ast })$ that $v_{b}$ selected according to the
previous stages is actually the start node that generates the optimal
solution with optimal size $k_{b}^{\ast }$ is at least $1-\frac{1}{2}%
(k_{max}+m-2)\alpha ^{\frac{T}{rmk_{max}}}$.
\end{theorem}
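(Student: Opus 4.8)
The plan is to bound the complementary ``failure'' event $\{Q \neq Q_{b,k_{b}^{\ast}}^{\ast}\}$ --- the event that some competing start-node/size pair actually beats the designated pair $(v_{b},k_{b}^{\ast})$ --- and then take the complement. First I would observe that $Q \neq Q_{b,k_{b}^{\ast}}^{\ast}$ can hold only if either (i) some size $k \neq k_{b}^{\ast}$ of the same start node $v_{b}$ has $Q_{b,k}^{\ast} \ge Q_{b,k_{b}^{\ast}}^{\ast}$, of which there are $k_{max}-1$ possibilities, or (ii) some other start node $v_{i}\neq v_{b}$, at its own best size $k_{i}^{\ast}$, has $Q_{i,k_{i}^{\ast}}^{\ast} \ge Q_{b,k_{b}^{\ast}}^{\ast}$, of which there are $m-1$ possibilities. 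A union bound then gives
\[
P(Q \neq Q_{b,k_{b}^{\ast}}^{\ast}) \le \sum_{k \neq k_{b}^{\ast}} P(Q_{b,k_{b}^{\ast}}^{\ast} \le Q_{b,k}^{\ast}) + \sum_{i \neq b} P(Q_{b,k_{b}^{\ast}}^{\ast} \le Q_{i,k_{i}^{\ast}}^{\ast}).
\]

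Next I would apply Lemma \ref{lemmausedprob} to each of these $k_{max}+m-2$ terms, always taking the reference pair in the lemma to be $(v_{b},k_{b}^{\ast})$ (so that the exponent is $N_{b,k_{b}^{\ast}}$, which is what we can lower bound): each term is at most $\tfrac{1}{2}\bigl((d-c_{b,k_{b}^{\ast}})/(d_{b,k_{b}^{\ast}}-c_{b,k_{b}^{\ast}})\bigr)^{N_{b,k_{b}^{\ast}}}$, where $d$ is the maximum sampled utility of the competing pair. Since $d_{a,k_{a}^{\ast}}$ is by definition the largest such maximum over all pairs other than $(v_{b},k_{b}^{\ast})$, every ratio is at most $\alpha=(d_{a,k_{a}^{\ast}}-c_{b,k_{b}^{\ast}})/(d_{b,k_{b}^{\ast}}-c_{b,k_{b}^{\ast}})$, so every term is at most $\tfrac{1}{2}\alpha^{N_{b,k_{b}^{\ast}}}$ and hence
\[
P(Q \neq Q_{b,k_{b}^{\ast}}^{\ast}) \le \tfrac{1}{2}(k_{max}+m-2)\,\alpha^{N_{b,k_{b}^{\ast}}}.
\]

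It then remains only to show $N_{b,k_{b}^{\ast}} \ge T/(rmk_{max})$. I would argue one stage at a time: each of the $r$ stages carries total budget $T/r$; by the allocation rule (\ref{Eq:Thm1Eq}), $N_{b,t}=\tfrac{T}{r}\,P(Q=Q_{b}^{\ast})/\sum_{j}P(Q=Q_{j}^{\ast})$, which is at least the uniform share $T/(rm)$ because $v_{b}$ attains the largest $P(Q=Q_{j}^{\ast})$; then by (\ref{targetn}) the size-$k_{b}^{\ast}$ share is $N_{b,t}\,P(Q_{b,k_{b}^{\ast}}^{\ast}\ge Q_{b,k_{b}^{\ast}}^{\ast})/\sum_{k}P(Q_{b,k}^{\ast}\ge Q_{b,k_{b}^{\ast}}^{\ast}) \ge N_{b,t}/k_{max}$, since the numerator is $1$ and the denominator is at most $k_{max}$; finally the reallocation (\ref{reallocate}) only moves bookkeeping between nested group sizes, so the effective number of size-$k_{b}^{\ast}$ samples generated in the stage is still at least $T/(rmk_{max})$. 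Summing over the $r$ stages gives $N_{b,k_{b}^{\ast}} \ge T/(rmk_{max})$; since $\alpha\le 1$ we get $\alpha^{N_{b,k_{b}^{\ast}}}\le\alpha^{T/(rmk_{max})}$, and taking complements yields the claimed bound $P(Q=Q_{b,k_{b}^{\ast}}^{\ast}) \ge 1-\tfrac{1}{2}(k_{max}+m-2)\alpha^{T/(rmk_{max})}$.

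The step I expect to be the main obstacle is this last one: rigorously showing that the OCBA allocation of (\ref{Eq:Thm1Eq})--(\ref{reallocate}) really does guarantee the best pair at least the uniform share $T/(rmk_{max})$ despite the reallocation that shuffles budget between nested sizes, and handling integer rounding of the $N$'s cleanly. A secondary subtlety is to confirm that the regime of Lemma \ref{lemmausedprob} actually in force here is the ``$\tfrac{1}{2}(\cdot)^{N}$'' case rather than the trivial ``$=1$'' case, and that restricting the second sum to each $v_{i}$'s best size $k_{i}^{\ast}$ (rather than summing over all $m k_{max}$ pairs) is legitimate; both follow from $v_{b},k_{b}^{\ast}$ being the pair whose interval estimate dominates, which forces $d_{b,k_{b}^{\ast}}$ to be the largest of the $d$'s and hence $\alpha\in[0,1]$.
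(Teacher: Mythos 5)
Your proposal is correct and follows essentially the same route as the paper: the paper's proof is a Bonferroni-inequality bound over the $k_{max}-1$ competing sizes and $m-1$ competing start nodes (written as a product of two Bonferroni-bounded factors rather than your single union bound, which is equivalent), followed by Lemma~\ref{lemmausedprob} with reference pair $(v_b,k_b^{\ast})$, the substitution of $\alpha$ for each ratio, and finally $N_{b,k_b^{\ast}}\geq T/(rmk_{max})$. Your stage-by-stage justification of that last budget lower bound is actually more explicit than the paper, which simply asserts it.
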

\begin{proof}
According to the Bonferroni inequality, $p\{\cap _{i=1}^{m}(Y_{i}<0)\}\geq
1-\sum_{i=1}^{m}[1-p(Y_{i}<0)]$. In our case, $Y_{i}$ is replaced by $%
Q^{\ast}_{i,k_i^{\ast}}-Q^{\ast}_{b,k_b^{\ast}}$ to acquire a lower bound for the probability that $v_{b}$ enjoys the maximal group utility with optimal size $k_b^{\ast}$. Therefore, by using Equation \ref{usedprob},
\vspace{-1mm}
\begin{eqnarray*}
&&P(Q=Q^{\ast}_{b,k_b^{\ast}}) \\
&=& p\{\cap _{l=1,l\neq k_b^{\ast}}^{k_{max}}(Q_{b,l}^{\ast}-Q_{b,k_b^{\ast}}^{\ast})\leq 0)\}\cdot \\
&& p\{\cap _{i=1,i\neq b}^{m}(Q^{\ast}_{i,k_i^{\ast}}-Q^{\ast}_{b,k_b^{\ast}}\leq 0)\}\\
&\geq &(1-\sum_{l=1,l\neq k_{b^{\ast}}}^{k_{max}}[1-p(Q^{\ast}_{b,k_l^{\ast}}-Q^{\ast}_{b,k_b^{\ast}}\leq 0)])\cdot\\
&&(1-\sum_{i=1,i\neq b}^{m}[1-p(Q^{\ast}_{i,k_i^{\ast}}-Q^{\ast}_{b,k_b^{\ast}}\leq 0)])\\
&&\\
&\geq &(1-\frac{1}{2}\sum_{l=1,l\neq k_b^{\ast}}^{k_{max}}(\frac{d_{b,l}-c_{b,k_b^{\ast}}}{d_{b,k_b^{\ast}}-c_{b,k_b^{\ast}}})^{N_{b,k_b^{\ast}}})\cdot\\
&&(1-\frac{1}{2}\sum_{i=1,i\neq b}^{m}(\frac{d_{i,k_i^{\ast}}-c_{b,k_b^{\ast}}}{d_{b,k_b^{\ast}}-c_{b,k_b^{\ast}}})^{N_{b,k_b^{\ast}}}
)
\end{eqnarray*}
By introducing $\alpha$, $P(Q=Q^{\ast}_{b,k_b^{\ast}})$ is greater than
\begin{eqnarray*}
&& (1-\frac{1}{2}(k_{max}-1)\alpha^{N_{b,k_b^{\ast}}})(1-\frac{1}{2}(m-1)\alpha^{N_{b,k_b^{\ast}}}) \\
&\geq & 1-\frac{1}{2}(k_{max}+m-2)\alpha^{N_{b,k_b^{\ast}}}\\
&\geq & 1-\frac{1}{2}(k_{max}+m-2)\alpha^{\frac{T}{rmk_{max}}}.
\end{eqnarray*}
The theorem follows.
\end{proof}

Given the total budgets $T$ and a general cost function, i.e., without any
assumption, the following theorem derives a lower bound of the solution
obtained by \emph{BARGS}.

\begin{theorem}
The maximum group utility $E[Q]$ from the solution of \emph{BARGS} is at
least $N_{b,k_{b}^{\ast }}(\frac{1}{N_{b,k_{b}^{\ast }}+1}%
)^{1+N_{b,k_{b}^{\ast }}^{-1}}\cdot Q^{\ast }$, where $N_{b,k_{b}^{\ast }}$
after $r$ stages is $\frac{4+mk_{max}(r-1)}{4rmk_{max}}T$, $Q^{\ast }$ is
the optimal solution for a PSGA problem in $r$-stage computational budget
allocation, and $k_{b}^{\ast }$ is the optimal group size of the best node $%
v_{b}$. \label{baselinethm2}
\end{theorem}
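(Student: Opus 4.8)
The plan is to bound $E[Q]$ below by the expected best sample coming from the single most promising pair $(v_b,k_b^\ast)$, and then to optimize a threshold inequality. Since $Q$ is by definition the maximum group utility over all start nodes and all sizes, $Q\ge Q_{b,k_b^\ast}^\ast$, so $E[Q]\ge E[Q_{b,k_b^\ast}^\ast]$ and it is enough to bound $E[Q_{b,k_b^\ast}^\ast]$. By the OCBA modeling behind Lemma~\ref{lemmausedprob}, $Q_{b,k_b^\ast}$ is (after the cost shift) uniform on an interval whose right endpoint is the best attainable utility $Q^\ast=d_{b,k_b^\ast}$; assuming $c_{b,k_b^\ast}\ge 0$ (or translating the support to start at $0$, which only weakens the bound), the maximum of its $N_{b,k_b^\ast}$ i.i.d.\ copies satisfies $P\!\left(Q_{b,k_b^\ast}^\ast\le x\right)=(x/Q^\ast)^{N_{b,k_b^\ast}}$ on $[0,Q^\ast]$, hence $P\!\left(Q_{b,k_b^\ast}^\ast\ge t\right)\ge 1-(t/Q^\ast)^{N_{b,k_b^\ast}}$ for $t\in[0,Q^\ast]$.

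The crux is the elementary estimate $E[X]\ge t\,P(X\ge t)$, valid for any nonnegative random variable $X$ and any $t\ge 0$. Applied with $X=Q_{b,k_b^\ast}^\ast$ it gives $E[Q_{b,k_b^\ast}^\ast]\ge t\bigl(1-(t/Q^\ast)^{N_{b,k_b^\ast}}\bigr)$ for every $t\in[0,Q^\ast]$. Substituting $t=uQ^\ast$ with $u\in[0,1]$, the right-hand side becomes $Q^\ast\bigl(u-u^{\,N_{b,k_b^\ast}+1}\bigr)$; setting the derivative to zero yields the optimum $u^\ast=(N_{b,k_b^\ast}+1)^{-1/N_{b,k_b^\ast}}$, at which the bracket equals $u^\ast\bigl(1-\tfrac{1}{N_{b,k_b^\ast}+1}\bigr)=N_{b,k_b^\ast}\bigl(\tfrac{1}{N_{b,k_b^\ast}+1}\bigr)^{1+N_{b,k_b^\ast}^{-1}}$. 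Thus $E[Q]\ge E[Q_{b,k_b^\ast}^\ast]\ge N_{b,k_b^\ast}\bigl(\tfrac{1}{N_{b,k_b^\ast}+1}\bigr)^{1+N_{b,k_b^\ast}^{-1}}Q^\ast$, which is exactly the claimed factor.

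It remains to lower-bound $N_{b,k_b^\ast}$, the number of size-$k_b^\ast$ samples the best node accumulates over the $r$ stages; note that, by the reallocation remark accompanying Eq.~\ref{reallocate}, the number of size-$k_b^\ast$ solutions produced from $v_b$ in stage $t$ equals the target $N_{b,k_b^\ast,t}$ of Eq.~\ref{targetn}, so $N_{b,k_b^\ast}=\sum_{t=1}^{r}N_{b,k_b^\ast,t}$. In the first stage the $m$ start nodes and $k_{max}$ sizes each receive at least their even share of the stage budget $T/r$, giving $N_{b,k_b^\ast,1}\ge T/(rmk_{max})$. In each of the remaining $r-1$ stages, Theorem~\ref{ControlComputationThm1} together with Eq.~\ref{Eq:Thm1Eq} forces $v_b$ to receive at least half of $T/r$ in the worst case, and Eq.~\ref{targetn} then gives size $k_b^\ast$ at least half of $v_b$'s share, so $N_{b,k_b^\ast,t}\ge T/(4r)$. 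Summing, $N_{b,k_b^\ast}\ge \frac{T}{rmk_{max}}+\frac{(r-1)T}{4r}=\frac{4+mk_{max}(r-1)}{4rmk_{max}}\,T$, and substituting into the factor above completes the argument. I expect this last step --- the worst-case budget accounting --- to be the main obstacle: one must justify the two ``at least one half'' fractions from Eqs.~\ref{Eq:Thm1Eq}--\ref{targetn} and Theorem~\ref{ControlComputationThm1}, and confirm that the run-to-$k_{max}$ bookkeeping of Eq.~\ref{reallocate} leaves the size-$k_b^\ast$ count intact; a secondary caveat is that the distributional input is the central-limit/uniform approximation of Lemma~\ref{lemmausedprob} rather than an exact law, so it should be carried through as an explicit assumption.
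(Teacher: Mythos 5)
Your proposal follows essentially the same route as the paper: reduce $E[Q]$ to $E[Q^{\ast}_{b,k_b^{\ast}}]$ under the uniform/OCBA sampling model of Lemma~\ref{lemmausedprob}, obtain the factor $N_{b,k_b^{\ast}}(\frac{1}{N_{b,k_b^{\ast}}+1})^{1+N_{b,k_b^{\ast}}^{-1}}$, and count the budget as $\frac{T}{rmk_{max}}+\frac{(r-1)T}{4r}=\frac{4+mk_{max}(r-1)}{4rmk_{max}}T$, which is exactly the paper's own decomposition. The only substantive difference is that you actually derive the key factor via the threshold inequality $E[X]\ge t\,P(X\ge t)$ optimized at $t=(N_{b,k_b^{\ast}}+1)^{-1/N_{b,k_b^{\ast}}}Q^{\ast}$, whereas the paper merely asserts it as a known consequence of the OCBA framework; the caveats you flag (nonnegativity of the support, the uniform approximation, and the unproved ``one half'' budget fractions) are present but unaddressed in the paper's argument as well.
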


\begin{proof}
It is challenging to derive the performance bound without any assumption on
the cost function due to (1) no useful properties such as such as
monotonicity, submodularity, and convexity, so it is impossible to estimate
the performance according to the size, and (2) the cost function can
dominate the performance bound or be neglected according to $\beta$.
However, the cumulative distribution function of $Q_{i,k}^{\ast}$ follows
the Gaussian distribution regardless to $i$ and $k$. Therefore, we analyze
the performance bound by regarding each combination of $Q_{i,k}^{\ast}$ as a
sampling result of different start nodes.

Notice that, given a fixed size $k$, the maximum preference from the
solution of \emph{BARGS} from the best node $v_{b}$ without the cost
function is at least $N_{b}(\frac{1}{N_{b}+1})^{1+N_{b}^{-1}}\cdot Q^{\ast }$%
, where $N_{b}$ after $r$ stages is $\frac{4+m(r-1)}{4rm}T$, and $Q^{\ast }$
is the optimal solution for a PSGA problem without cost function in $r$%
-stage computational budget allocation. Therefore,

\begin{equation}
E[Q] \geq N_{b,k_b^{\ast}}(\frac{1}{N_{b,k_b^{\ast}}+1})^{1+N_{b,k_b^{%
\ast}}^{-1}}\cdot Q^{\ast },
\end{equation}
If the computational budget allocation is $r-$stages with $T\geq $ $k_{max}mr%
\frac{\ln (k_{max}m-1)}{\ln (\frac{1}{\alpha })}$, $N_{b,k_{b}^{\ast}}$ is $%
\frac{T}{rmk_{max}}+\frac{1}{2}\frac{r-1}{2r}T$, which is $\frac{%
4+mk_{max}(r-1)}{4rmk_{max}}T$. The theorem follows.
\end{proof}

\textbf{Time Complexity of BARGS.} The time complexity of \emph{BARGS}
contains two parts. The first phase selects $m$ start nodes with $O(E+n+$ $%
m\log n)$ time, where $O(E)$ is to sum up the interest and social tightness
scores, $O(n+m\log n)$ is to build a heap and extract $m$ nodes with the
largest sum. Afterward, the second phase of \emph{BARGS} includes $r$
stages, and each stage allocates the computational resources with $O(m)$
time and generates $O(\frac{T}{r})$ new partial solutions with at most $%
k_{max}$ nodes for all start nodes. Therefore, the time complexity of the
second phase is $O\left( r(m+\frac{T}{r}k_{max})\right) =O(k_{max}T)$, and 
\emph{BARGS} therefore needs $O(E+m\log n+k_{max}T)$ running time.


\section{Experimental Results}

\label{Exp}

\subsection{Experiment Setup}

\label{DataPreparation} We implement \emph{BARGS} in Facebook and invite 50
people from various communities, e.g., schools, government, technology
companies, and businesses to join our user study. We compare the solution
quality and running time of manual coordination and \emph{BARGS} for answering PSGA problems, to evaluate the need of an automatic group recommendation service.
Each user is asked to plan 5 social activities with the social graphs
extracted from their social networks in Facebook. The interest scores follow
the power-law distribution with the exponent as 2.5 according to the recent analysis \cite{PowerLawClauset09} on real datasets. The social tightness score between two friends is derived according to the number of common friends, which represents the proximity interaction \cite{ChaojiNnodeLink12}, and the
probability of negative weights \cite{JureSN10}. Then, the weighted coefficient $\lambda$ on social tightness scores and interest scores and the weighted coefficient $\beta$ on group preference and activity cost in Footnote 4 are set as the average value specified by the 50 people, i.e., $\lambda=0.527$ and $\beta=0.514$. Most importantly, after the scores
are returned by the above renowned models, each user is allowed to
fine-tune the two scores by themselves. In addition to the user study, three real datasets are evaluated in the
experiment. The first dataset is crawled from Facebook with $90,269$\ users
in the New Orleans network\footnote{\url{
http://socialnetworks.mpi-sws.org/data-wosn2009.html}}. The second dataset
is crawled from DBLP dataset with $511,163$ nodes and $1,871,070$ edges. The
third dataset, Flickr\footnote{\url{http://socialnetworks.mpi-sws.org/data-imc2007.html}}, with $1,846,198$ nodes and $22,613,981$ edges, is also incorporated to demonstrate the
scalability of the proposed algorithms.

In this paper, the activity cost is modelled by a piecewise linear
function, which can approximate any non-decreasing functions. We set the activity cost according to the auditorium cost and other related cost in
Duke Energy Center\footnote{\url{http://www.dukeenergycenterraleigh.com/uploads/venues/rental/5-rateschedule.pdf}}.
\begin{equation*}
C(k)=\left\{ \begin{aligned} 400-k &~~ \text{ if }0\leq k\leq 100.\\ 850-k
&~~ \text{ if }100< k\leq 600.\\ 2200-k &~~ \text{ if }600< k\leq 1750.\\
\end{aligned}\right.
\end{equation*}

We compare deterministic greedy (\emph{DGreedy}), randomized
greedy (\emph{RGreedy}), and \emph{BARGS} in an HP DL580 server with four
Intel E7-4870 2.4 GHz CPUs and 128 GB RAM. \emph{RGreedy} first chooses the same $m$ start nodes as \emph{BARGS}. At each iteration, \emph{RGreedy} calculates the preference increment of adding a neighboring node $v_j$ to the intermediate solution $V_S$ obtained so far for each neighboring node, and sums them up as the total preference increment. Afterward, \emph{RGreedy} sets the node selection probability of each neighbor as the ratio of the corresponding preference increment to the total preference increment, similar to the concept in the greedy algorithm. Notice that the computation budgets represent the number of generated solutions. With more computation budgets, \emph{RGreedy} generates more solutions of group size $k_{max}$, examines the group utility by subtracting the activity cost from group size $1$ to $k_{max}$, and selects the group with maximum group utility. It is worth noting that \emph{RGreedy} is computationally intensive and not scalable to support a large group size because it is necessary to sum up the interest scores and social tightness scores during the selection of a node neighboring to each partial solution. Therefore, we can only present the results of 
\emph{RGreedy} with small group sizes.

The default $m$ in the experiment is set as $n/k_{max}$ since $n/k_{max}$ groups can be acquired from a network with $n$ nodes if each group has $k_{max}$ participants. The default cross-entropy parameters $\rho$ and $\alpha$ are set as 0.3 and 0.99 as recommended by the cross-entropy method 
\cite{RubinsteinCE01}. Since \emph{BARGS} natively supports parallelization, we
also implemented them with OpenMP for parallelization, to demonstrate the
gain in parallelization with more CPU cores.

\vspace{-9pt}
\subsection{User Study}
\vspace{-4pt}
\label{UserStudy}
Figures \ref{exp1}(a)-(c) compare manual coordination and \emph{BARGS} in the user study. In addition, the optimal solution is also derived with the enumeration method since the network size is very small. Figures \ref{exp1}(a) and (b) present the solution
quality and execution time with different network sizes. The result
indicates that the solutions obtained by \emph{BARGS} are identical to the
optimal solutions, but users are not able to acquire the optimal solutions even when $n=5$. As $n$ increases, the solution quality of manual coordination degrades rapidly. We also compare the accuracy of selecting the optimal group size in Figure \ref{exp1}(c). As $n$ increases, it becomes more difficult for a user to correctly identify the optimal size, while \emph{BARGS} can always select the optimal one. Therefore, it is desirable to deploy \emph{BARGS} as an automatic group recommendation service, especially to address the need of a large group in a massive social network nowadays.

\begin{figure}[t]
\vspace{-3mm}
\centering
\subfigure[] {\
\includegraphics[scale=0.14]{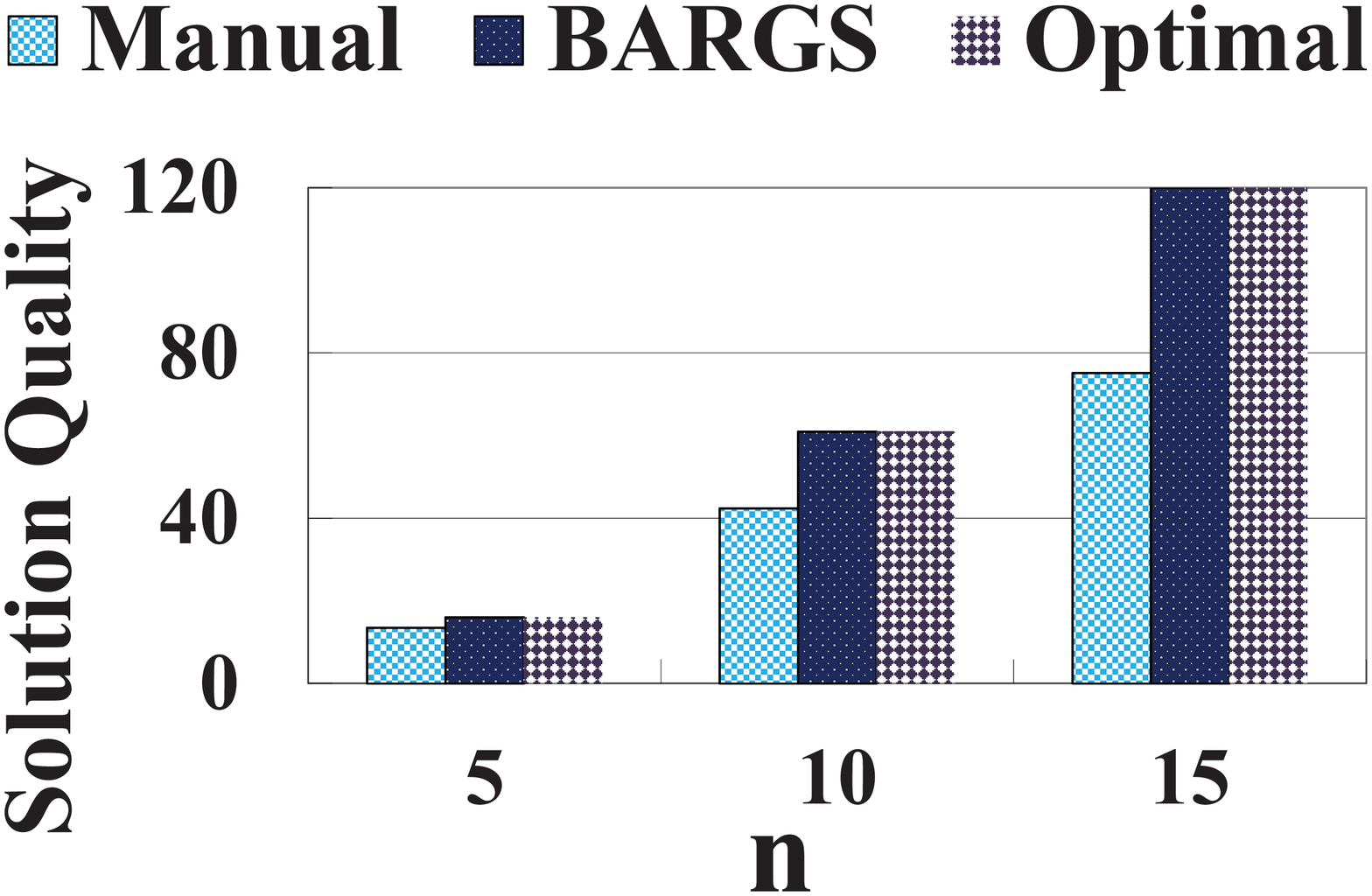} } 
\subfigure[] {\
\includegraphics[scale=0.14]{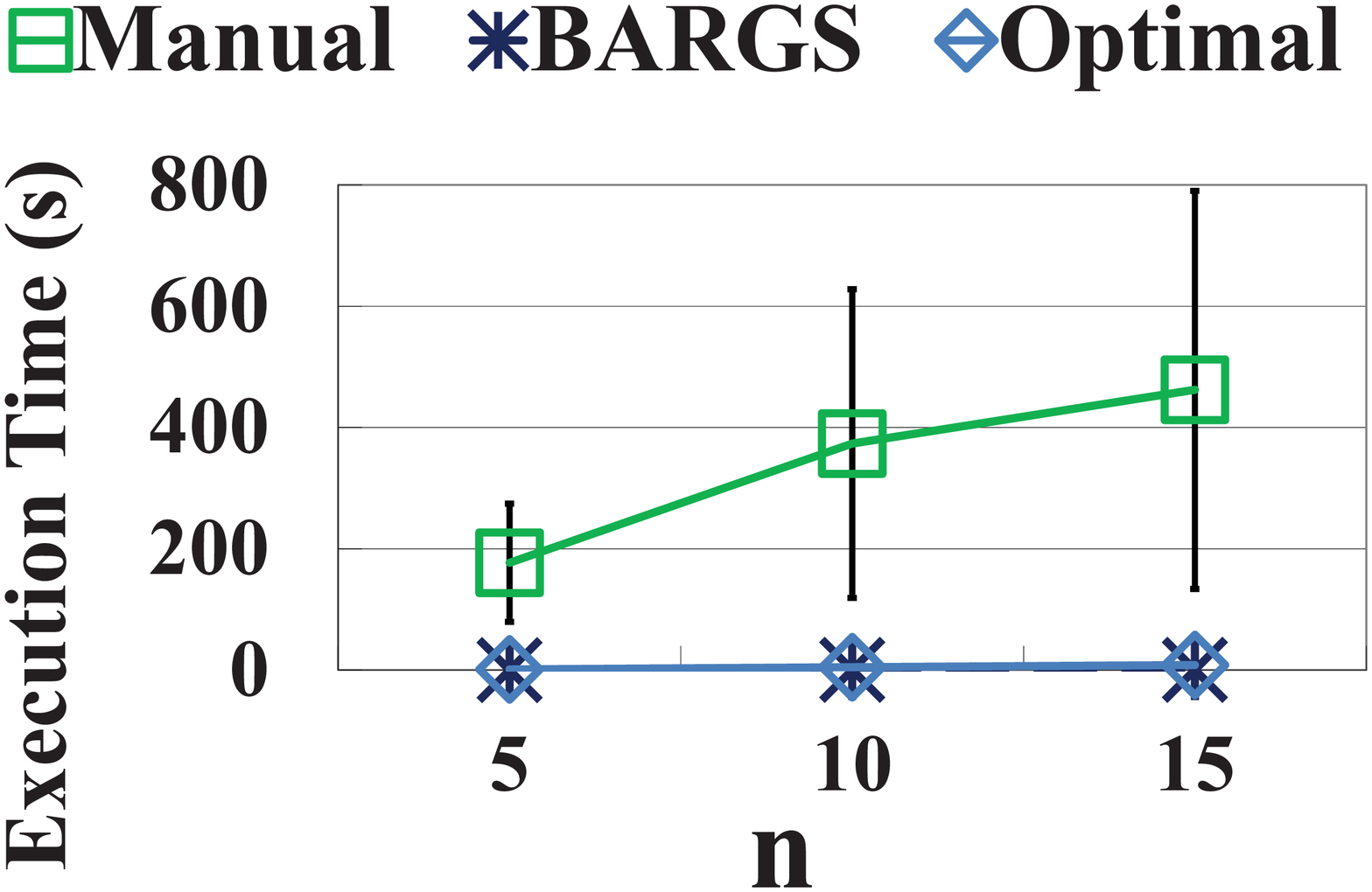} } 
\subfigure[] {\
\includegraphics[scale=0.14]{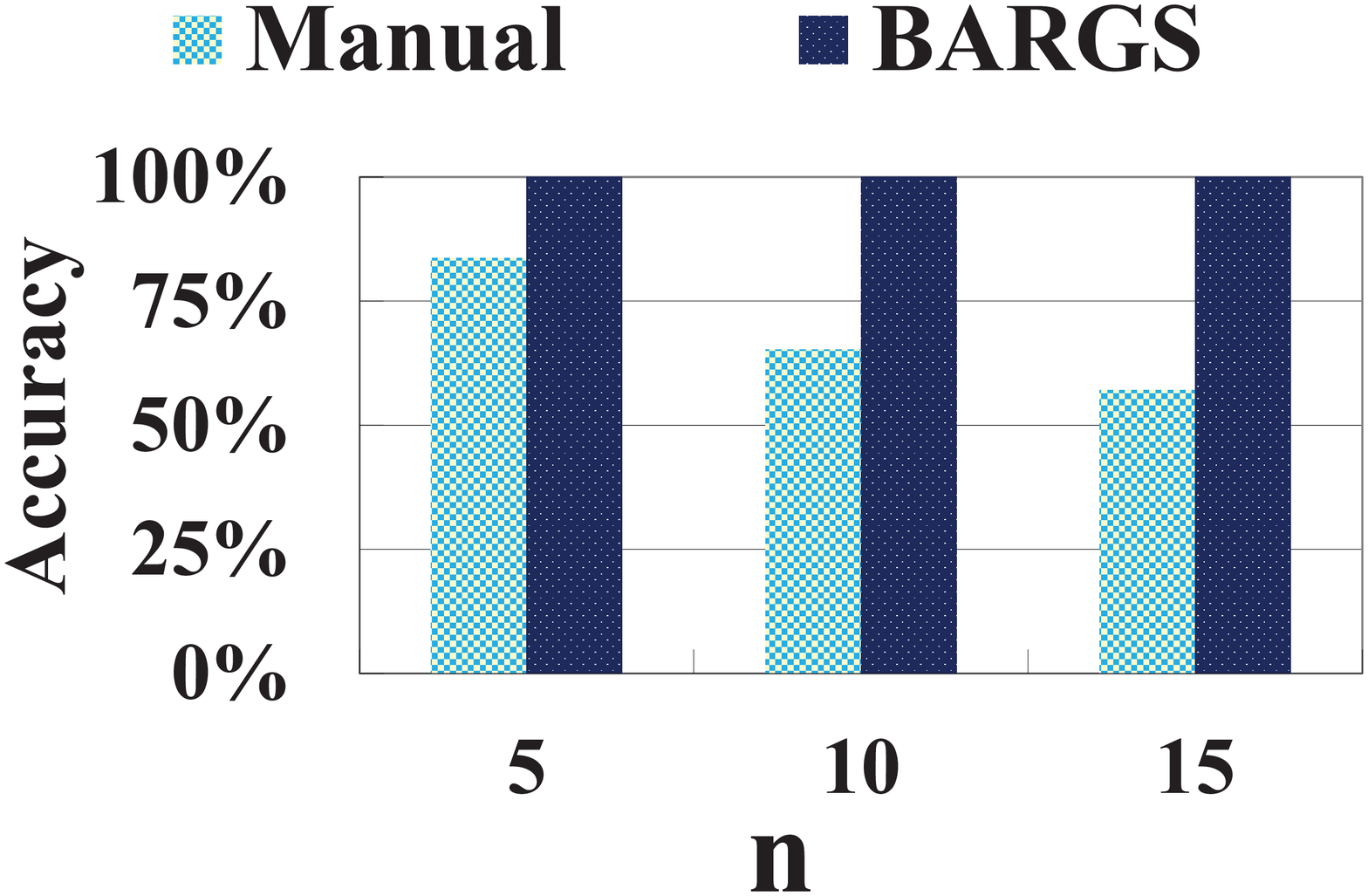} } 
\vspace{-12pt} 
\caption{Results of user study}
\vspace{-7mm}
\label{exp1}
\end{figure}

\subsection{Performance Comparison and Sensitivity Analysis}
\vspace{-2mm}
\begin{figure}[t]
\vspace{-3mm}
\centering
\subfigure[] {\
\includegraphics[scale=0.14]{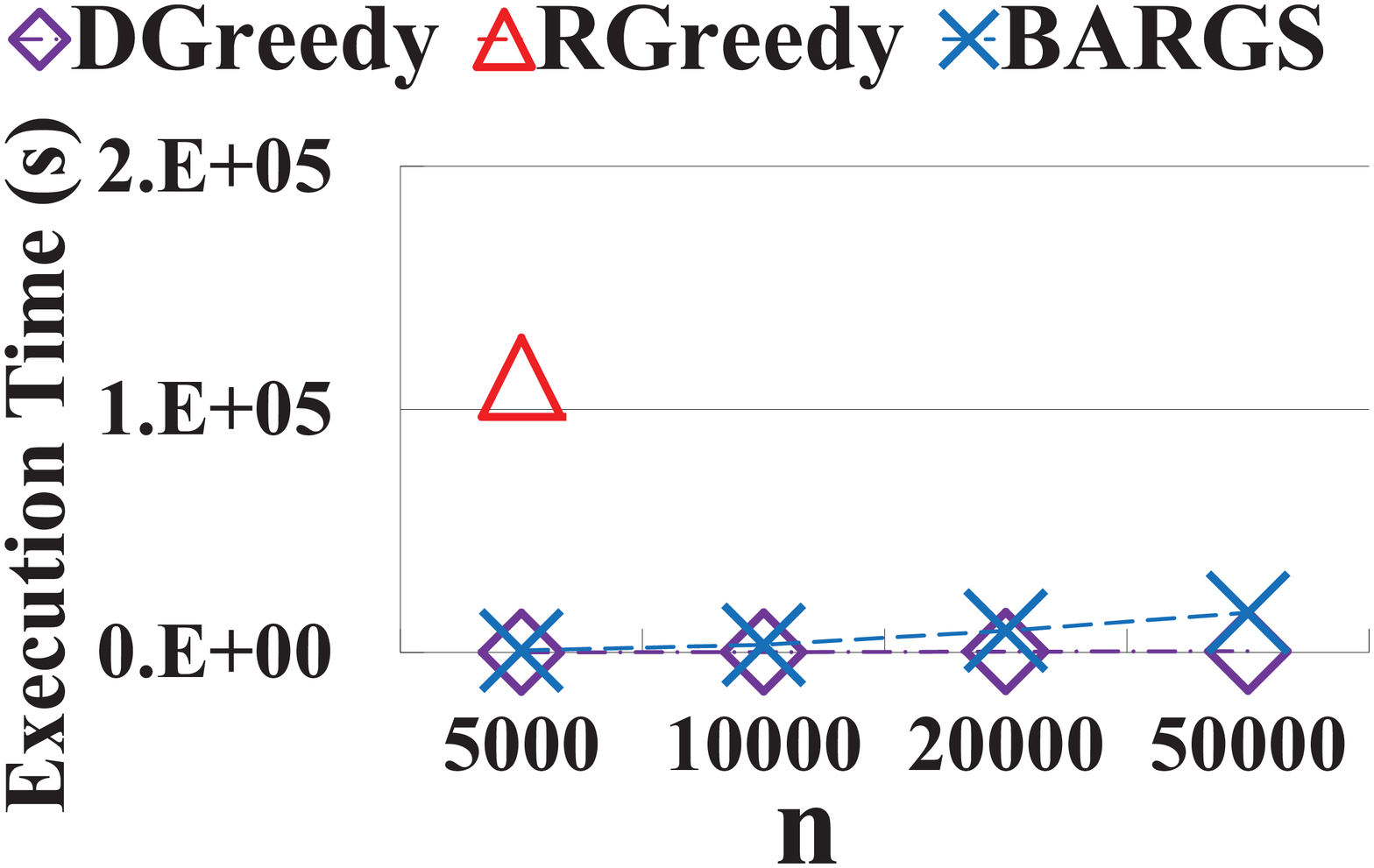} } 
\subfigure[] {\
\includegraphics[scale=0.14]{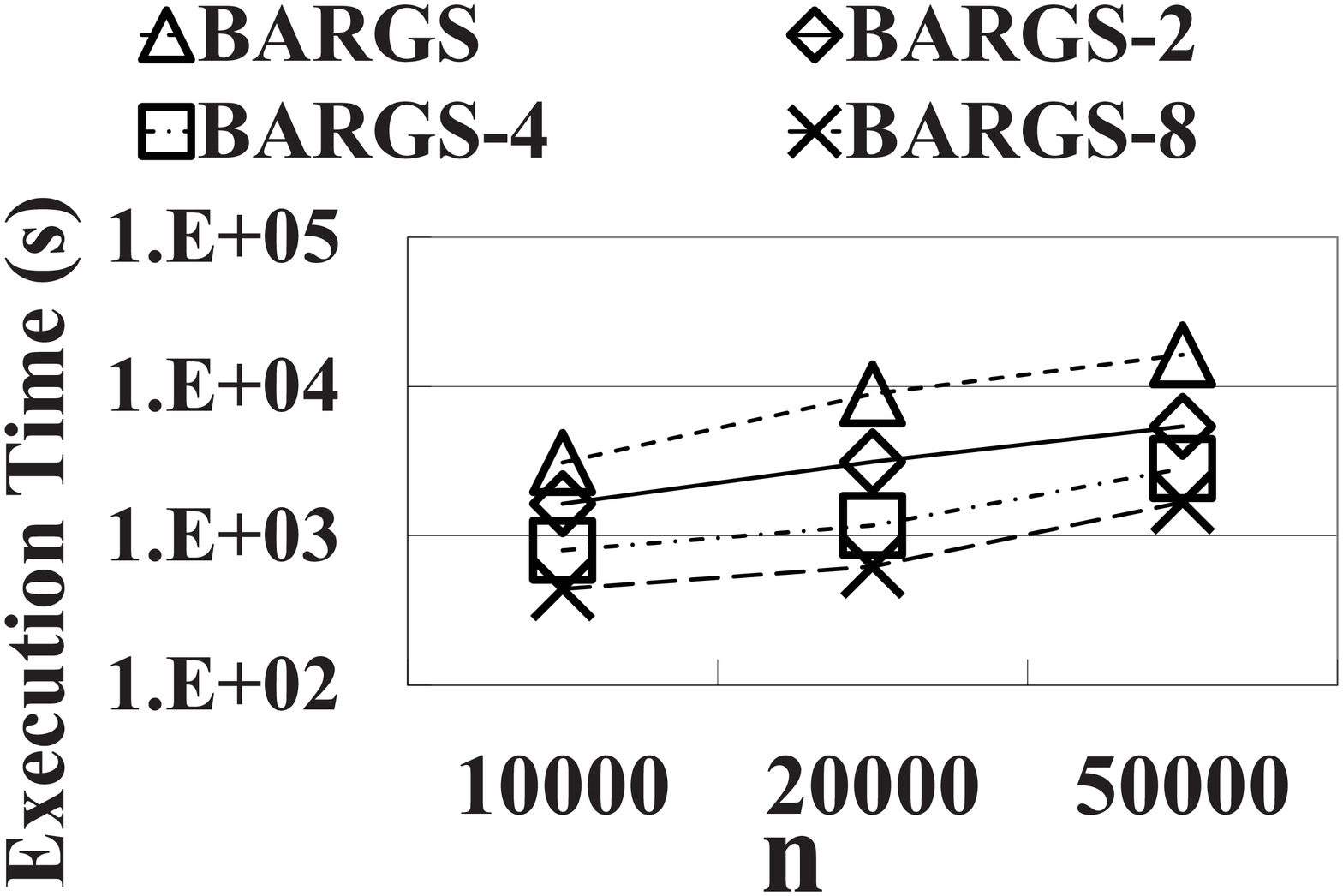} } 
\subfigure[] {\
\includegraphics[scale=0.14]{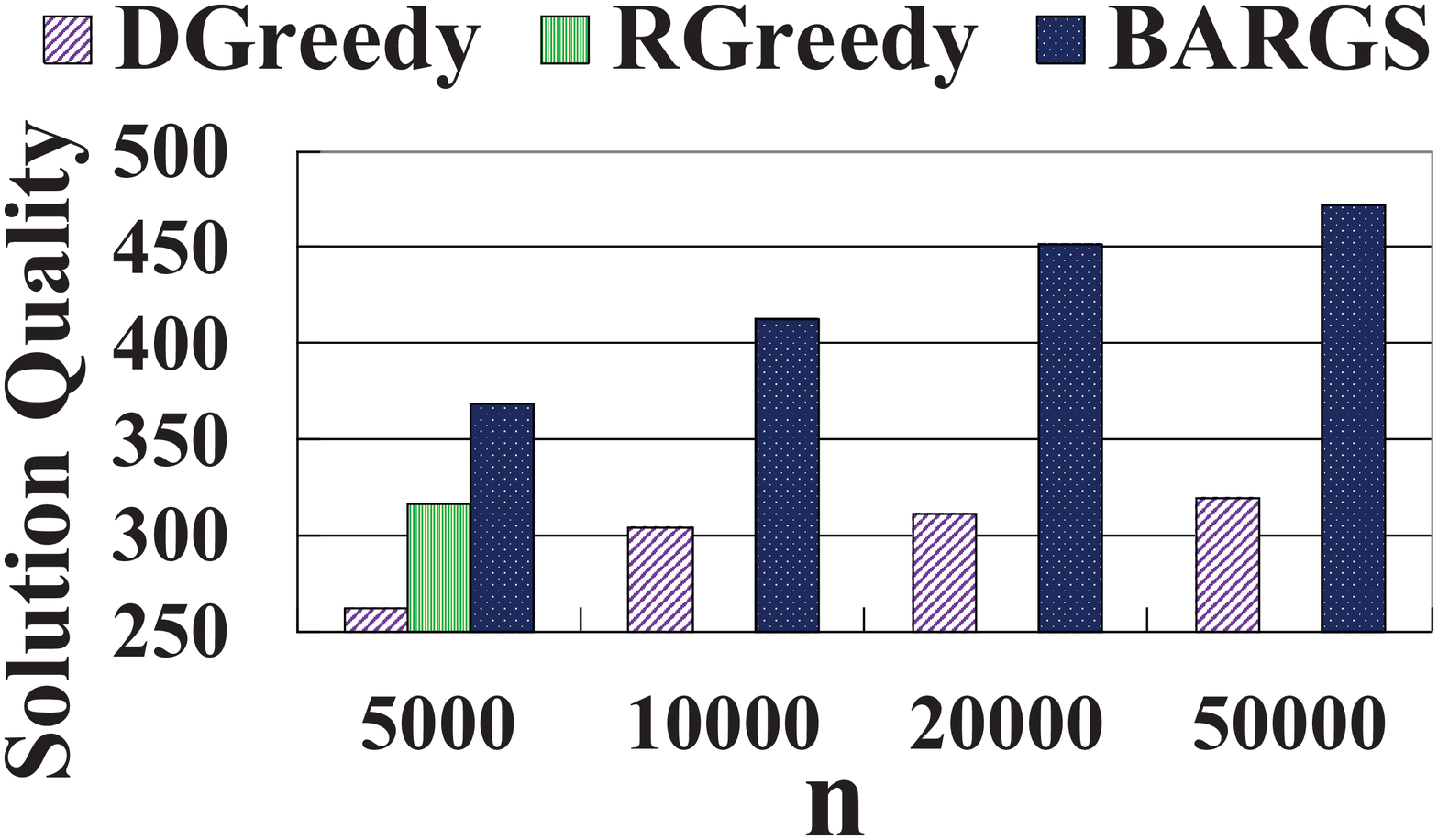} } 
\subfigure[] {\
\includegraphics[scale=0.14]{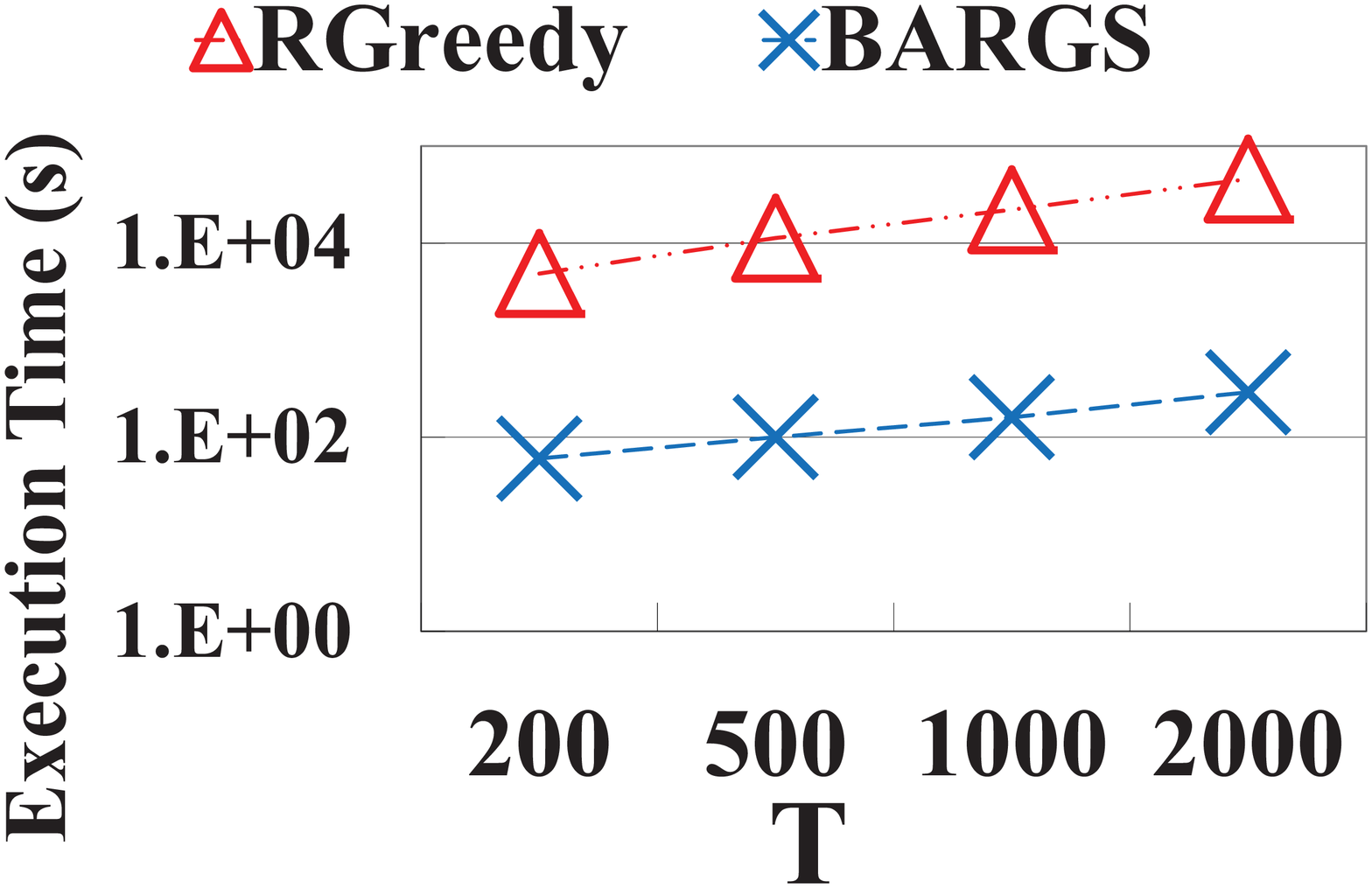} } 
\subfigure[] {\
\includegraphics[scale=0.14]{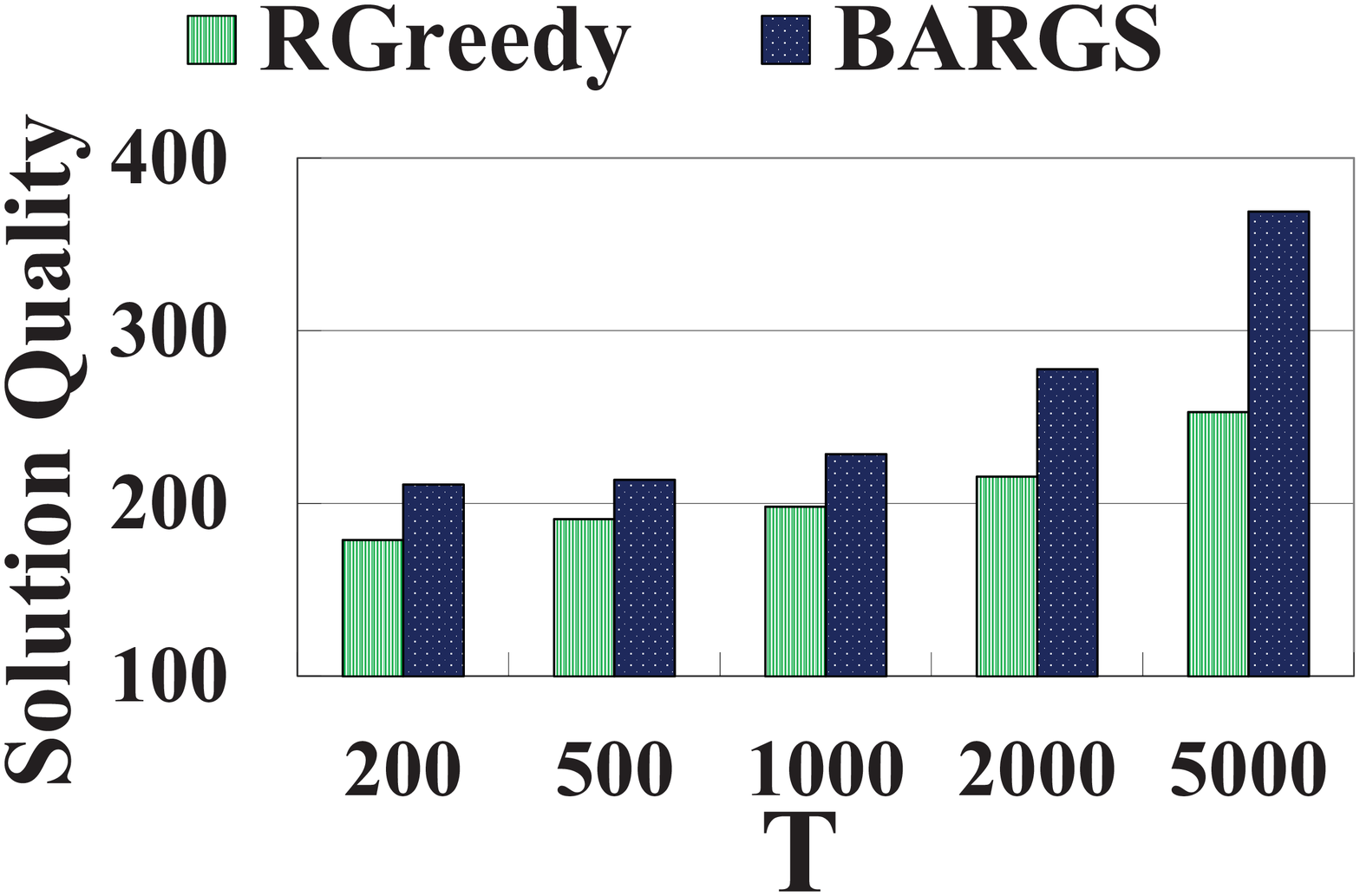} } 
\subfigure[] {\
\includegraphics[scale=0.14]{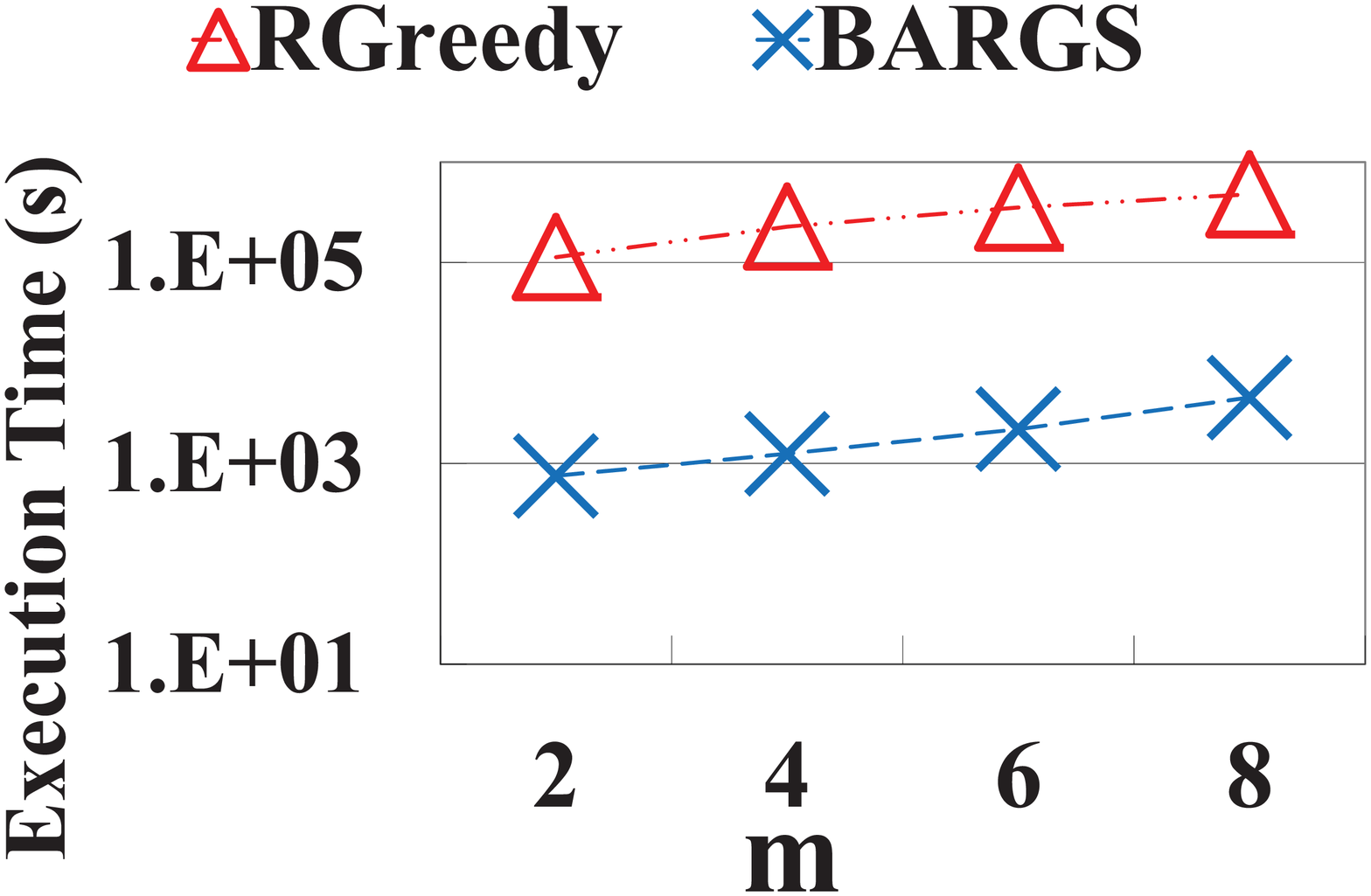} } 
\subfigure[] {\
\includegraphics[scale=0.14]{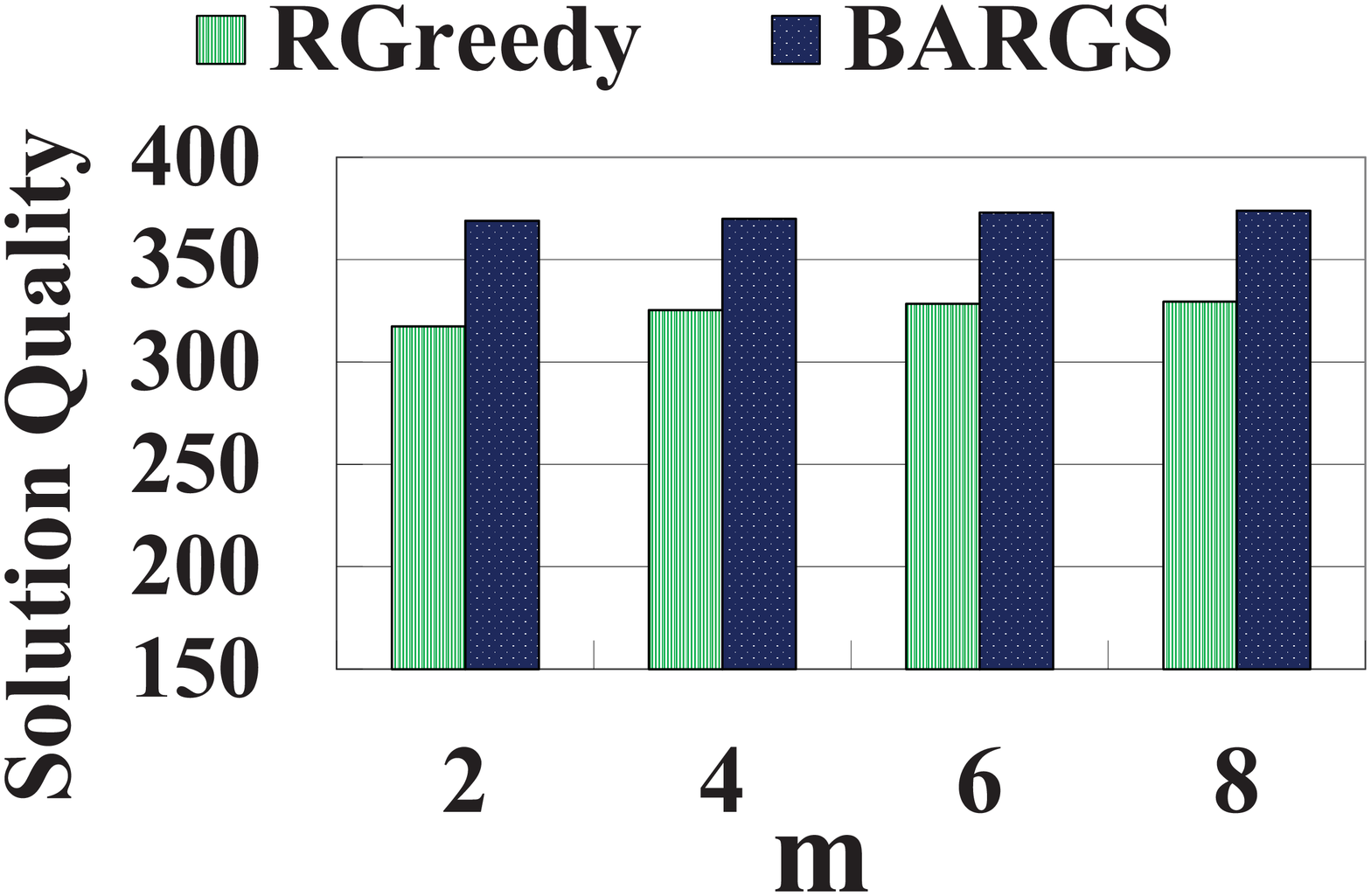} } 
\subfigure[] {\
\includegraphics[scale=0.14]{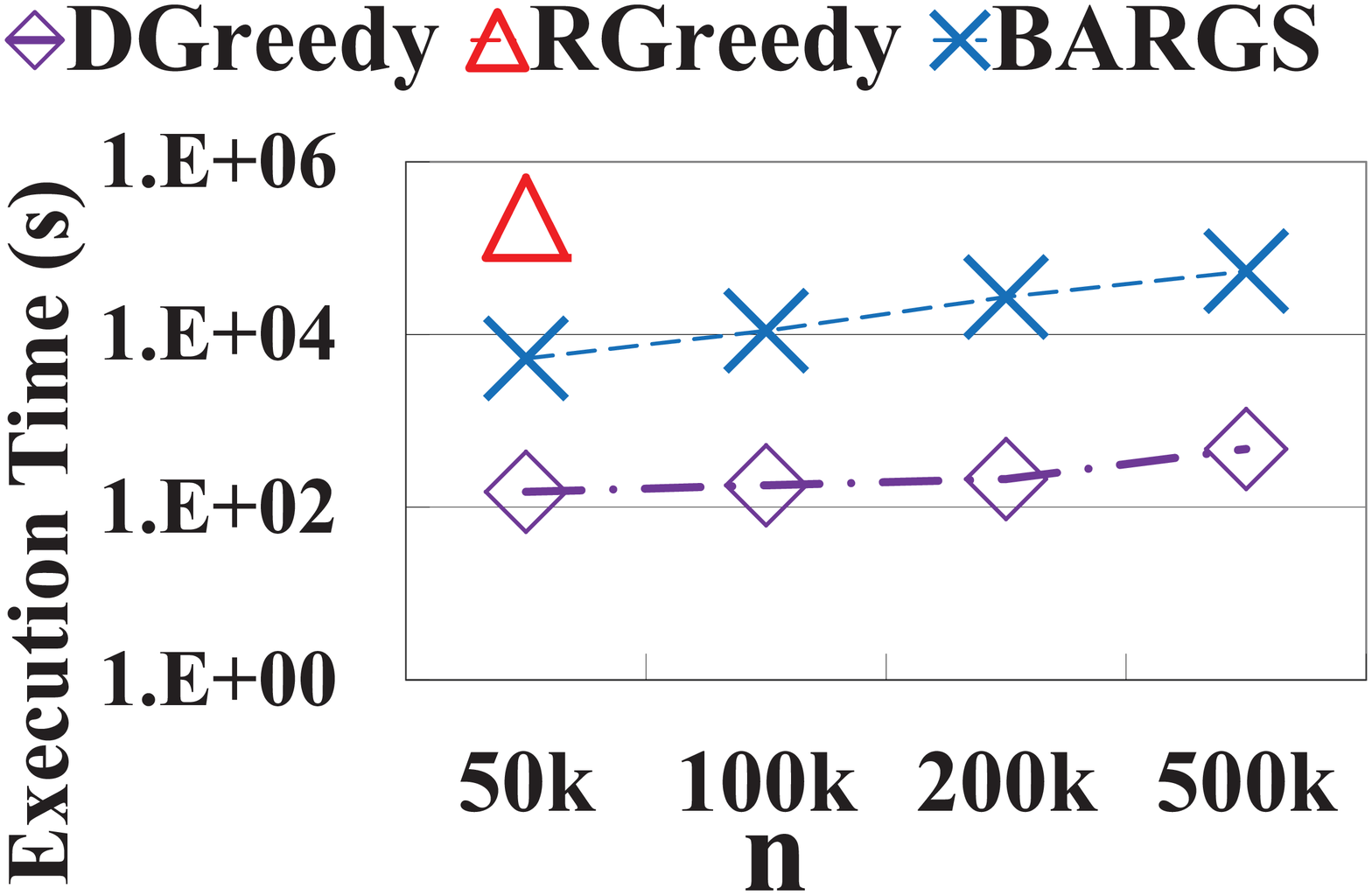} } 
\subfigure[] {\
\includegraphics[scale=0.14]{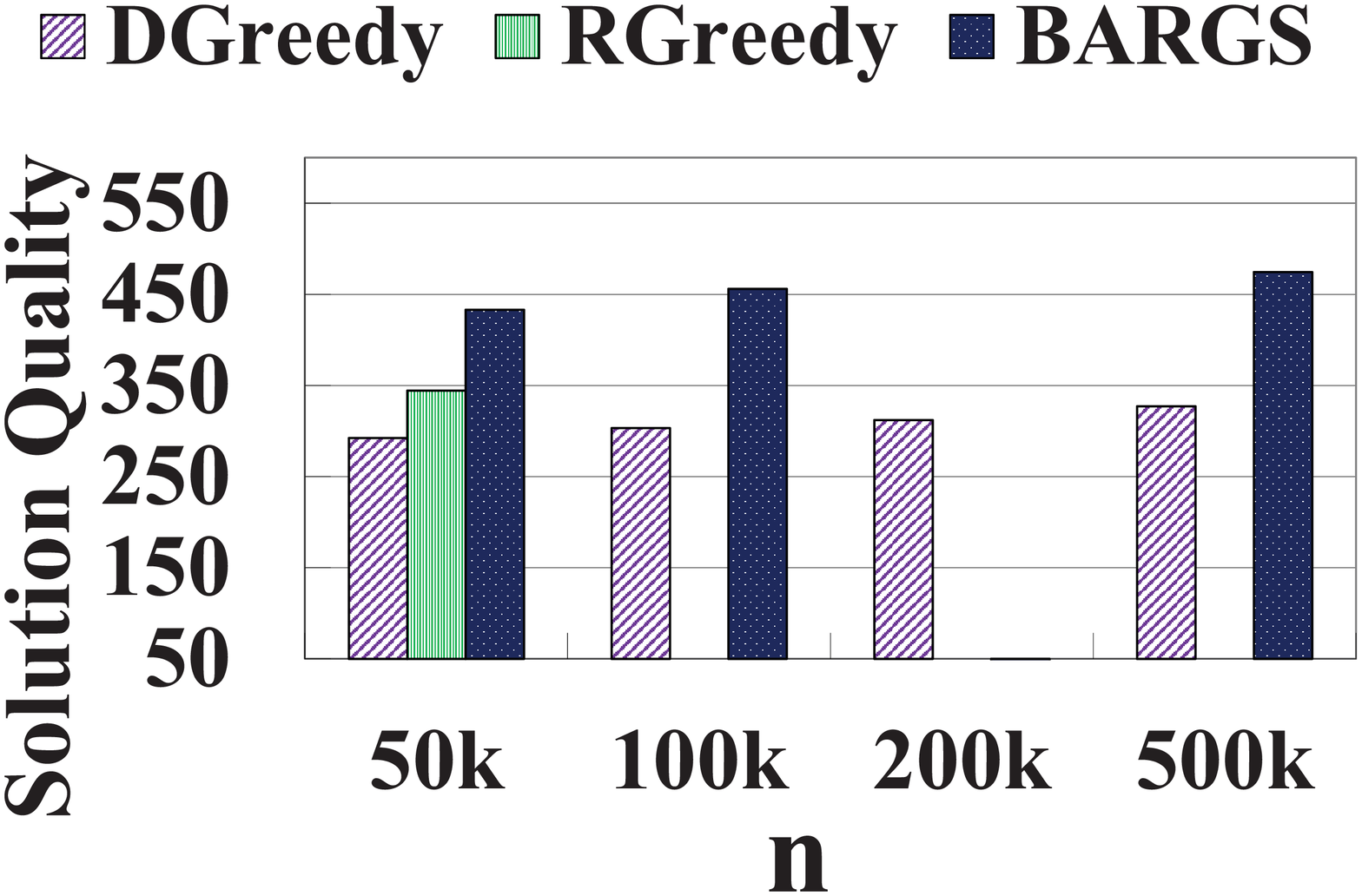} }
 \vspace{-12pt} 
\caption{Experimental results on Facebook and DBLP datasets}
\label{exp2}
\vspace{-18pt}
\end{figure}

Figure \ref{exp2}(a) compares the execution time of \emph{DGreedy}, \emph{%
RGreedy}, and \emph{BARGS} by sampling different numbers of nodes from Facebook
data. \emph{DGreedy} is always the fastest one since it is a deterministic
algorithm and generates only one final solution, whereas \emph{RGreedy} requires more than $10^{5}$
seconds. The results of \emph{RGreedy} do not return in 2 days as $n$
increases to $10000$. To evaluate the performance of \emph{BARGS} with
multi-threaded processing, Figure \ref{exp2}(b) shows that we can accelerate
the processing speed to $7.2$ times with $8$ threads. The
acceleration ratio is slightly lower than $8$ because OpenMP forbids
different threads to write at the same memory position at the same time.
Therefore, it is expected that \emph{BARGS} with parallelization is
promising to be deployed as a value-added \textit{cloud service}.

In addition to the running time, Figure \ref%
{exp2}(c) compares the solution quality of different approaches.
The results indicate that \emph{BARGS} outperforms \emph{DGreedy} and \emph{%
RGreedy}, especially under a large $n$. The group utility of \emph{BARGS} is
45\% better than the one from \emph{DGreedy} when $n=50000$. On the other
hand, \emph{RGreedy} outperforms \emph{DGreedy} since it has a chance to
jump out of the local optimal solution.

Figures \ref{exp2}(d) and (e) compare the execution time and solution
quality of two randomized approaches under different total computational
budgets, i.e., $T$. As $T$ increases, the solution quality of \emph{BARGS}
increases faster than that of \emph{RGreedy} because it can optimally allocate
the computation resources. Even though the solution quality of \emph{RGreedy}
is closer to \emph{BARGS} in some cases, \emph{BARGS} is much faster than \emph{%
RGreedy} by an order of $10^{-2}$.

Figures \ref{exp2}(f) and (g) present the execution time and solution
quality of \emph{RGreedy} and \emph{BARGS} with different numbers of start
nodes, i.e., $m$. The results show that
the solution quality in Figure \ref{exp2}(g) is almost the same as $m$
increases, demonstrating that it is sufficient for $m$ to be set as a
value smaller than $\frac{n}{k_{max}}$ as recommended by OCBA \cite{OCBA10}.
The running time of \emph{BARGS} for $m=2$ is only $60\%$ of the running time for $m=4$ as shown in Figure \ref{exp2}(f), while the solution quality remains almost the same.

\emph{BARGS} is also evaluated on the DBLP dataset. Figures \ref{exp2}(h)
and (i) show that \emph{BARGS} outperforms \emph{DGreedy} by $50\%$ and \emph{RGreedy} by $26\%
$ in solution quality when $n=500000$. \emph{BARGS} is still faster than \emph{RGreedy} by
an order of $10^{-2}$. However, \textit{RGreedy }runs faster on the DBLP
dataset than on the Facebook dataset, because the DBLP dataset is a sparser
graph with an average node degree of $3.66$. Therefore, the number of
candidate nodes to be chosen during the expansion of the partial solution in the DBLP dataset increases much more
slowly than in the Facebook dataset with an average node degree of $26.1$.
Nevertheless, \emph{RGreedy} is still not able to generate a solution for a
large network size $n$ due to its unacceptable efficiency.

\vspace{-3mm}
\section{Conclusion}
\label{Conclu}
\vspace{-1mm}
To the best of our knowledge, there is no real system or existing work in
the literature that addresses the issues of scale-adaptive group optimization for social activity planning based on topic interest, social tightness, and activity cost. To fill this research gap and
satisfy an important practical need, this paper formulated a new
optimization problem called PSGA to derive a set of attendees and maximize
the group utility. We proved that PSGA is NP-hard and devised a simple but
effective randomized algorithms, namely \emph{BARGS}, with a guaranteed performance bound. The user study demonstrated that the social groups
obtained through the proposed algorithm implemented in Facebook
significantly outperforms the manually configured solutions by users. This
research result thus holds much promise to be profitably adopted in social
networking websites as a value-added service.
\newpage
\appendix
\begin{algorithm}[H]
\caption{BARGS}
\label{BARGS-alg}
\begin{algorithmic}[1]
\renewcommand{\algorithmicrequire}{\textbf{Input:}}
\renewcommand{\algorithmicensure}{\textbf{Output:}}
\REQUIRE Graph $G(V,E)$, social network size $n$, activity cost function $C(k)$, maximum group size $k_{max}$, correctly select probability $P(CS)$, solution quality $Q$, percentile of CE $\rho$, and smoothing weighting $w$
\ENSURE The best group F generating maximum willingness
\STATE $c_{i}=\infty$, $d_{i}= 0$ for all $i$;
\STATE $m=\left\lceil \frac{n}{k_{max}}\right\rceil$, $w=0$;
\STATE Select $m$ candidate nodes to candidate set $\mathcal{M}$;
\STATE $T_{1}=\left\lceil m\frac{\ln (\frac{2(1-P(CS))}{m-1})}{\ln \alpha }\right\rceil$;
\STATE Find the number of stages $r$ by first consulting $N_{b}$ table with solution $q$, and $r$=$\left\lceil \frac{4N_{b}}{T_{1}}-\frac{4k}{n}+1\right\rceil$;
\FOR {$t=1$ to $r$}
     \IF {$t=1$}
       \FOR {$i=1$ to $m$}
         \STATE $A_{i}=\frac{T_{1}}{m}$;
         \STATE Set the node selection probability vector $p_{i,t}$ as uniform;
       \ENDFOR
    \ELSE
           \STATE $A_{total}=0$;
           \FOR {$i=1$ to $m$}
             \STATE $A_{i}$= $\frac{1}{2}(\frac{d_{i}-c_{b}}{d_{b}-c_{b}})^{N_{b}}$;
             \STATE $A_{total}$=$A_{total}$+$A_{i}$;
           \ENDFOR
           \STATE  $A_{i}$= $T_{1}A_{i}$/$A_{total}$; 
    \ENDIF

    \FOR {$i=1$ to $m$}
        \STATE $V_{S}=\mathcal{M}_{i}$
        \STATE $V_{A}=\emptyset$
        \STATE $X=\emptyset$
        \FOR{$x=1$ to $A_{i}$}
           \STATE $V_{A}=N(\mathcal{M}_{i})$
           \FOR{$k=1$ to $k_{max}-1$}
              \STATE Random select a node $v$ in $V_{A}$ in accordance with $p_{i,k,t}$ to $V_{S}$;
              \STATE $V_{A}=V_{A} \cup N(v)$
           \ENDFOR
             \STATE $u=U(V_{S})$;
             \STATE $X.add(V_{S},u)$;
              \IF {$u>d_{i,k}$}
                \STATE $d_{i,k}=u$;
              \ENDIF
              \IF {$w<c_{i,k}$}
                \STATE $c_{i,k}=w$;
              \ENDIF
              \IF {$w>W(F)$}
                \STATE $b=i$;
                \STATE $F=V_{S}$;
              \ENDIF
        \ENDFOR
    \ENDFOR
                \COMMENT{Update node selection probability $p_{i,k,t+1}$}
                \STATE X=$DescendingSort(X,u)$;
                \IF {$\gamma_{t}>X_{(\left\lceil \rho A_{i}\right\rceil)}.w$}
                    \STATE $\gamma_{t+1}=\gamma_{t}$; 
                \ELSE
                    \STATE $\gamma_{t+1}=X_{(\left\lceil \rho A_{i}\right\rceil)}.w$; 
                \ENDIF             
                \FORALL {Sample $x$ in $X$}
                      \IF {$x.u>\gamma_{t+1}$}
                        \FORALL {$v_{j} \in x$}
                           \STATE  $p_{i,k,t+1,j}=p_{i,k,t+1,j}+1$;
                        \ENDFOR
                      \ENDIF
                \ENDFOR
                \FOR{$j=1$ to $n$}
                   \STATE  $p_{i,,k,t+1,j}=p_{i,k,t+1,j}/\left\lceil \rho A_{i}\right\rceil$;
                   \STATE  $p_{i,k,j,t+1}=w p_{i,t+1,j}+(1-w)p_{i,k,t,j}$;
                \ENDFOR
\ENDFOR

    \STATE Output $F$;
\end{algorithmic}
\end{algorithm}

\bibliographystyle{abbrv}
\bibliography{hhshuai_2015}

\end{document}